\newcommand{\secvsabove}{\vspace{-1.3mm}}
\newcommand{\secvsbelow}{\vspace{-1.5mm}}
\newcommand{\paravs}{\vspace{-2mm}}
\newcommand{\method}{\textsc{ReDiffuse}}
\theoremstyle{plain}
\newtheorem{theorem}{Theorem}[section]
\theoremstyle{definition}
\newtheorem{definition}[theorem]{Definition}
\theoremstyle{remark}
\newcommand{\RETURN}[1]{\STATE \textbf{return} #1}
\icmltitlerunning{Towards Black-Box Membership Inference Attack for Diffusion Models}
\begin{document}

\twocolumn[
\icmltitle{Towards Black-Box Membership Inference Attack for Diffusion Models}




\begin{icmlauthorlist}
\icmlauthor{Jingwei Li}{tsinghua,sqz}
\icmlauthor{Jing Dong}{hk}
\icmlauthor{Tianxing He}{tsinghua,sqz}
\icmlauthor{Jingzhao Zhang}{tsinghua,sqz}

\end{icmlauthorlist}

\icmlaffiliation{tsinghua}{Institute for Interdisciplinary Information Sciences,
Tsinghua University}
\icmlaffiliation{hk}{The Chinese University of Hong Kong}
\icmlaffiliation{sqz}{Shanghai Qizhi Institute}

\icmlcorrespondingauthor{Jingwei Li}{ljw22@mails.tsinghua.edu.cn}
\icmlcorrespondingauthor{Jingzhao Zhang}{jingzhaoz@mail.tsinghua.edu.cn}

\icmlkeywords{Machine Learning, ICML}

\vskip 0.3in
]



\printAffiliationsAndNotice{}  

\begin{abstract}
Given the rising popularity of AI-generated art and the associated copyright concerns, identifying whether an artwork was used to train a diffusion model is an important research topic.
The work approaches this problem from the membership inference attack (MIA) perspective. We first identify the limitation of applying existing MIA methods for proprietary diffusion models: the required access of internal U-nets.
To address the above problem, we introduce a novel membership inference attack method that uses only the image-to-image variation API and operates without access to the model's internal U-net. Our method is based on the intuition that the model can more easily obtain an unbiased noise prediction estimate for images from the training set. By applying the API multiple times to the target image, averaging the outputs, and comparing the result to the original image, our approach can classify whether a sample was part of the training set. We validate our method using DDIM and Stable Diffusion setups and further extend both our approach and existing algorithms to the Diffusion Transformer architecture. Our experimental results consistently outperform previous methods. 
\end{abstract}

\secvsabove
\section{Introduction}
\secvsbelow

Recently, there has been a surge in the popularity of generative models, with diffusion models in particular, gaining huge attention within the AI community~\citep {sohl2015deep,song2019generative,song2020score}. These models have demonstrated remarkable capabilities across various tasks, including unconditional image generation~\citep{ho2020denoising, song2020denoising}, text-to-image generation~\citep{rombach2022high, yu2022scaling, nichol2021glide} and image-to-image generation~\citep{saharia2022palette}. This surge has given rise to powerful AI art models such as DALL-E 2~\citep{ramesh2022hierarchical}, Stable Diffusion~\citep{rombach2022high}, and Imagen~\citep{saharia2022photorealistic}. AI-generated art has a promising future and is expected to have widespread impact.

Effective training of diffusion models requires high-quality data. It is thus crucial to design an algorithm that can identify whether a specific artwork has been used during the training of a model, thereby providing protection for these artworks and detecting misuse of data. This is especially important due to the rapid growth of generative models, which has raised concerns over intellectual property (IP) rights, data privacy, and the ethical implications of training on copyrighted or proprietary content without consent. As these models are increasingly deployed across industries, detecting whether a specific piece of content was used in training can help prevent unauthorized use of artistic works, protecting creators' copyrights and ownership rights. This is a classic problem in the field of machine learning, first introduced by~\citet{shokri2017membership} and named ``membership inference attack''.

A series of studies have been conducted on membership inference attacks against diffusion models. \citet{hu2023membership} was the first to examine this issue, utilizing the loss function values of diffusion models to determine whether an image is in the training set. \citet{duan2023diffusion} and~\citet{kong2023efficient} extended this work by relaxing assumptions about the access requirements of the model.  

Despite great progress, previous methods are not yet ready for MIA in proprietary diffusion models. Most existing approaches heavily rely on checking whether the U-net of the model predicts noise accurately, which is not practical since most commercial diffusion models available today offer only API access, while the U-net remains hidden. 

To address the above issue, we propose a membership inference attack method that relies on the variation API and does not require access to the denoise model (e.g., U-net). We observe that if we alter an image using the target diffusion model's variation API, the sampling process will be captured by  ``a region of attraction'' if the model has seen this image during training (illustrated in Figure~\ref{fig:intuition}). Based on the above observation, we propose the \method \,algorithm for MIA with image-to-image variation API, and can detect member images without accessing the denoise model. 
Our main contributions are listed as follows: 
\begin{enumerate}
    \item We propose a membership inference attack method that \textbf{does not require access to the model's internal structure}. Our method only involves using the model's variation API to alter an image and compare it with the original one. We name our method \method.
    \item We evaluate our method using DDIM~\citep{song2020denoising} and Stable Diffusion~\citep{rombach2022high} models on classical datasets, including CIFAR10/100~\citep{krizhevsky2009learning}, STL10-Unlabeled~\citep{coates2011analysis}, LAION-5B~\citep{schuhmann2022laion}, etc. Our method outperforms the previous methods. 
    \item We extend both existing algorithms and our own algorithm to the Diffusion Transformer~\citep{peebles2023scalable} architecture, implementing the membership inference attack within this model framework \textbf{for the first time}. Experimental results demonstrate that our algorithm is consistently effective.
\end{enumerate}

\secvsabove
\section{Related Works}
\secvsbelow

\paragraph{Diffusion Model}
The diffusion model, initially proposed by~\citet{sohl2015deep}, has achieved remarkable results in producing high-quality samples across a variety of domains. This ranges from image generation~\citep{song2019generative, song2020score, dhariwal2021diffusion}, audio synthesis~\citep{popov2021grad,kong2020diffwave, huang2022fastdiff}, and video generation~\citep{ho2022imagen, ho2022video, wu2023tune}, etc. Among existing diffusion models, the Denoising Diffusion Probabilistic Model (DDPM)~\citep{ho2020denoising} is one of the most frequently adopted. This approach introduces a dual-phase process for image generation: initially, a forward process gradually transforms training data into pure noise, followed by a reverse process that meticulously reconstructs the original data from this noise. Building on this model, there have been numerous follow-up studies, such as Stable Diffusion~\citep{rombach2022high}, which compresses images into a latent space and generates images based on text, and the Denoising Diffusion Implicit Models (DDIM)~\citep{song2020denoising}, which removes Gaussian randomness to accelerate the sampling generation process. These advancements demonstrate the versatility and potential of diffusion models.

\paravs
\paragraph{Data Safety and Membership Inference Attack}
In the era of big data, preserving data privacy is paramount. The training of diffusion models may involve sensitive datasets like artists' artworks, which are protected by copyright laws. Membership inference attacks, initially introduced by~\citet{shokri2017membership}, serve as an effective means to detect potential misuse of data without proper authorization. 
Its objective is to ascertain whether a particular data sample participated in the training phase of a target model. This approach is instrumental in probing privacy breaches and identifying illicit data utilization. Researchers primarily focus on membership inference attacks for classification models~\citep{salem2018ml, yeom2018privacy, long2018understanding, li2021membership}, embedding models~\citep{song2020information, duddu2020quantifying, mahloujifar2021membership}, and generative models~\citep{hayes2017logan, hilprecht2019monte, chen2020gan}. 

In the domain of membership inference attacks against diffusion models, \citet{wu2022membership, hu2023membership} use a white-box approach, which assumes access to the entire diffusion model and utilizes loss and likelihood to determine whether a sample is in the training set. \citet{duan2023diffusion, kong2023efficient, tang2023membership} have relaxed these requirements, eliminating the need for the entire model. They leverage the insight that samples within the training set yield more accurate noise predictions, thereby achieving high accuracy in membership inference attack tasks. However, they also require the outputs of the U-net, as it is necessary to obtain the noise predictions of intermediate steps. Recently, \citet{pang2023black} proposed a black-box membership inference attack method against diffusion models. Their method identifies whether a specific image is in a  finetuning dataset of 100 images by calculating the difference between the generated image and the target image, using the corresponding prompt as input. Their approach leverages the model's tendency to memorize  finetuning images and generate similar outputs. In contrast, we focus on detecting whether an image is in the pretraining dataset, where a pre-trained model often produces diverse outputs for the same prompt, making detection more challenging.

\secvsabove
\section{Preliminary}
\secvsbelow

In this section, we begin by introducing the notations used for several popular diffusion models. We first introduce the Denoising Diffusion Probabilistic Model (DDPM)~\citep{ho2020denoising}. Then, we extend to the Denoising Diffusion Implicit Model (DDIM)~\citep{song2020denoising} and Stable Diffusion~\citep{rombach2022high}, which are variants of DDPM used to accelerate image generation or generate images grounded in text descriptions. Lastly, we discuss Diffusion Transformer~\citep{peebles2023scalable}, a model that replaces the U-net architecture with a transformer and achieves higher-quality image generation.

\paravs
\paragraph{Denoising Diffusion Probabilistic Model (DDPM)} 
A diffusion model provides a stochastic path between an image
and noise. 
The forward process (denoted as $q$) iteratively incorporates Gaussian noise into an image, while the reverse process (denoted as $p_\theta$) gradually reconstructs the image from noise.
\begin{align*}
    q(x_t \mid x_{t-1}) &= \mathcal{N} \left(x_t; \sqrt{1-\beta_t} x_{t-1}, \beta_t \mathbf{I}\right)\,, \\
    p_{\theta}(x_{t-1} \mid x_{t}) &= \mathcal{N}\left(x_{t-1}; \mu_{\theta}(x_{t}, t), \Sigma_{\theta}(x_{t}, t)\right)\,,
\end{align*}
where  \( \mu_{\theta}(\cdot) \) and \( \Sigma_{\theta}(\cdot) \) are the mean and covariance of the denoised image parameterized by the model parameters \( \theta \), and \( \beta_t \) is a noise schedule that controls the amount of noise added at each step. 

\paravs
\paragraph{Denoising Diffusion Implicit Model (DDIM) }
DDIM modifies the sampling process to improve efficiency while maintaining high-quality image generation. Unlike DDPM, which requires a large number of denoising steps, DDIM uses a non-Markovian process to accelerate sampling.
\begin{align*}
x_{t-1} = \phi_{\theta}(x_t, t) &=  \sqrt{\bar{\alpha}_{t-1}} \left( \frac{x_t - \sqrt{1 - \bar{\alpha}_t} \epsilon_{\theta}(x_t,t)}{\sqrt{\bar{\alpha}_t}} \right) \\
&+ \sqrt{1 - \bar{\alpha}_{t-1}} \epsilon_{\theta}(x_t,t) \,,
\label{equ:ddim}
\end{align*}
where $\bar{\alpha}_t = \prod_{k=0}^{t} \alpha_k$, $\alpha_t + \beta_t = 1$ and $\epsilon_{\theta}(x_t,t)$ is the noise predicted by the model at step $t$. This formulation requires fewer sampling steps without compromising the quality of the generated images.

\paravs
\paragraph{Stable Diffusion}
Stable Diffusion leverages a variational autoencoder (VAE)~\citep{kingma2013auto} to encode images into a latent space and perform diffusion in this compressed space. The model uses a text encoder to guide the diffusion process, enabling text-to-image generation:
\begin{align*}
z_{t-1} \sim p_{\theta}(z_{t-1} \mid z_{t}, \tau_{\theta}(y))\,, \quad
x = \text{Decoder}(z_0)\,,
\end{align*}
where $x$ represents the output image, $z_t$ represents the latent variable at step $t$, and the text conditioning $\tau_{\theta}(y)$ is incorporated into the denoising process to generate the image. This approach significantly reduces computational costs and allows for high-quality image synthesis from textual descriptions.

\paravs
\paragraph{Diffusion Transformer}
Diffusion Transformer leverages the Vision Transformer~\citep{dosovitskiy2020image} structure to replace the U-net architecture traditionally used in diffusion models for noise prediction. Its training and sampling methods remain consistent with DDIM, with the only difference being the replacement of noise prediction network $\epsilon_{\theta}$ with $\epsilon_{\Tilde{\theta}}$, where $\Tilde{\theta}$ represents a Vision Transformer-based architecture. This approach further enhances the generation quality and ensured that the model possesses good scalability properties.

\secvsabove
\section{Algorithm Design}
\label{sec:method}
\secvsbelow

In this section, we introduce our algorithm. We begin by discussing the definition of variation API and the limitations of previous membership inference attack methods. In our formulations, we assume DDIM as our target model. The formulations for DDPM are highly similar and we omit it for brevity. We will discuss the generalization to the latent diffusion model in Section~\ref{sec:latent}.

\secvsabove

\subsection{The variation API for Diffusion Models}
Most previous works on membership inference attacks against diffusion models aim to prevent data leakage and hence rely on thresholding the model's training loss. For instance, \citet{hu2023membership} involves a direct comparison of image losses, while~\citet{duan2023diffusion,kong2023efficient} evaluates the accuracy of the model's noise prediction at initial or intermediate steps. However, the required access to the model's internal U-net structure prevents applications from copyright protection because most servers typically provide only black-box API access. 

In contrast, our method represents a step towards black-box MIA, as we do not directly access the model's internal structure. Instead, we rely solely on the variation API, which takes an input image and returns the corresponding output image. Below, we formalize the definition of the variation API used in our algorithm.
\begin{definition}[The variation API]
\label{def:variation}
We define the variation API $V_{\theta}(x,t)$ of a model as follows. Suppose we have an input image $x$, and the diffusion step of the API is $t$. The variation API randomly adds $t$-step Gaussian noise $\epsilon \sim \mathcal{N}(0, I)$ to the image and denoises it using the DDIM sampling process $\phi_{\theta}(x_t,t)$, returning the reconstructed image $V_{\theta}(x,t)$. The details are as follows:
\begin{align*}
x_t &= \sqrt{\bar{\alpha}_t} x + \sqrt{1 - \bar{\alpha}_t} \epsilon \,, \\
V_{\theta}(x,t) &= \Phi_\theta(x_t, 0) = \phi_\theta(\cdots \phi_\theta(\phi_\theta(x_t, t), t-1), 0) \,.
\end{align*}
\end{definition}
This definition aligns with the image-to-image generation method of the diffusion model, making access to this API practical in many setups~\citep{lugmayr2022repaint, saharia2022palette, wu2023latent}. Some APIs provide the user with a choice of $t$, while others do not and use the default parameter. We will discuss the influence of different diffusion steps in Section~\ref{sec:ablation}, showing that the attack performances are relatively stable and not sensitive to the selection of $t$.
We also note that for the target model, we can substitute $\phi_\theta(x_t, 0)$ with other sampling methods, such as the Euler-Maruyama Method~\citep{mao2015truncated} or Variational Diffusion Models~\citep{kingma2021variational}.



\begin{figure*}[t]
  \centering
    \includegraphics[scale = 0.39]{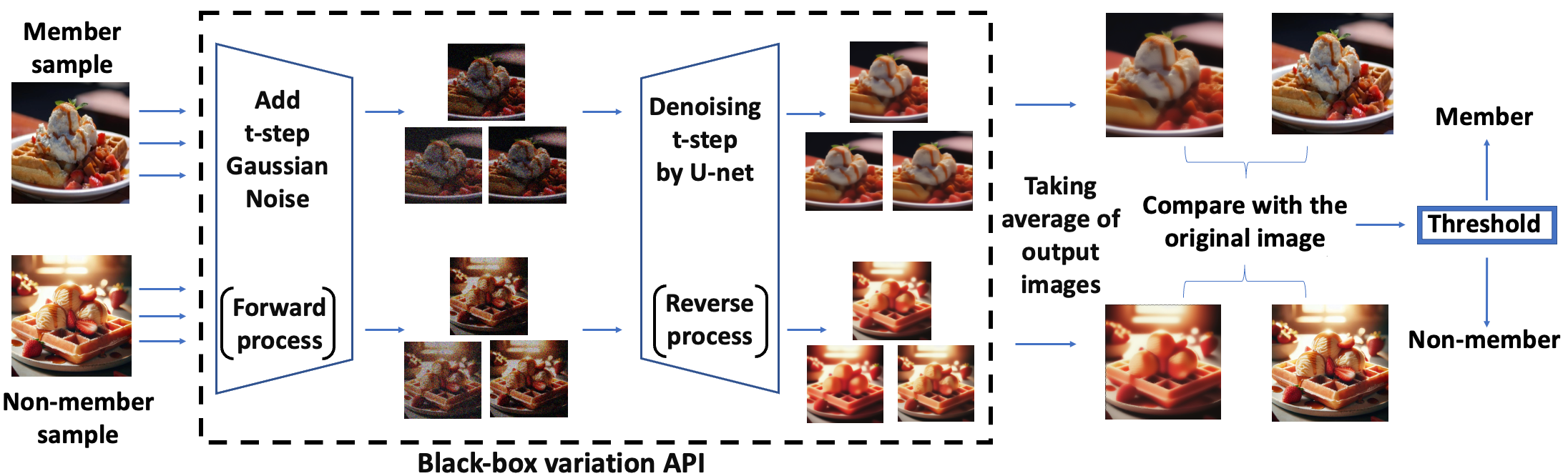}
    \caption{\textbf{The overview of \method}. We independently input the image to the variation API $n$ times with diffusion step $t$. We take the average of the output images and compare them with the original ones. If the difference is below a certain threshold, we determine that the image is in the training set.}
    \label{fig:method}
    \vspace{-0.4cm}
\end{figure*}

\secvsabove
\subsection{Algorithm} 
\secvsbelow
\label{sec:intuition}

In this section, we present the intuition of our algorithm. We denote $\|\cdot\|$ as the $L_2$ operator norm of a vector and $\mathcal{T} = \{1,2,\ldots,T\}$ as the set of diffusion steps. The key insight is derived from the training loss function of a fixed sample $x_0$ and a time step $t \in \mathcal{T}$: 
\begin{align*}
    L(\theta) = \mathbb{E}_{ \epsilon \sim \mathcal{N}(0, \mathbf{I})} \left[ \left\| \epsilon - \epsilon_\theta \left( \sqrt{\bar{\alpha}_t} x_0 + \sqrt{1 - \bar{\alpha}_t} \epsilon, t \right) \right\|^2 \right] \,.
\end{align*}

Denote $x_t = \sqrt{\bar{\alpha}_t} x_0 + \sqrt{1 - \bar{\alpha}_t} \epsilon$, we assume that the denoise model is expressive enough (the neural network's dimensionality significantly exceeds that of the image data) such that for the input $x_0 \in \mathbb{R}^d$ and time step $t \in \mathcal{T}$, the Jacobian matrix $ \nabla_\theta \epsilon_\theta(x_t,t)$ is full rank $(\ge d)$. This suggests that the model can adjust the predicted noise $\epsilon_\theta(x_t,t)$ locally in any direction. Then for a well trained model, we would have $\nabla_\theta L(\theta) = 0$. This implies
\begin{align*}
    &\implies \nabla_\theta \epsilon_\theta(x_t,t)^T \mathbb{E}_{ \epsilon \sim \mathcal{N}(0, \mathbf{I})} \left[  \epsilon - \epsilon_\theta \left( x_t, t \right)  \right] = 0 \,, \\
    &\implies \mathbb{E}_{ \epsilon \sim \mathcal{N}(0, \mathbf{I})} \left[  \epsilon - \epsilon_\theta \left( x_t, t \right)  \right] = 0\,.
\end{align*}
This is intuitive because if the neural network noise prediction exhibits a high bias, the network can adjust to fit the bias term, further reducing the training loss.

Therefore, for images in the training set, we expect the network to provide an unbiased noise prediction. Since the noise prediction is typically inaccessible in practical applications, we use the reconstructed sample $\hat{x}$ as a proxy. By leveraging the unbiasedness of noise prediction, we show that averaging over multiple independent reconstructed samples $\hat{x}_i$ can significantly reduce estimation error (see Theorem~\ref{lem:denoise}). However, for images not in the training set, the neural network may not provide an unbiased prediction at these points. The intuition is illustrated in Figure~\ref{fig:intuition}.

\begin{figure}[h]
 \vspace{-0.1cm}
  \centering
  \includegraphics[width=0.47\textwidth]{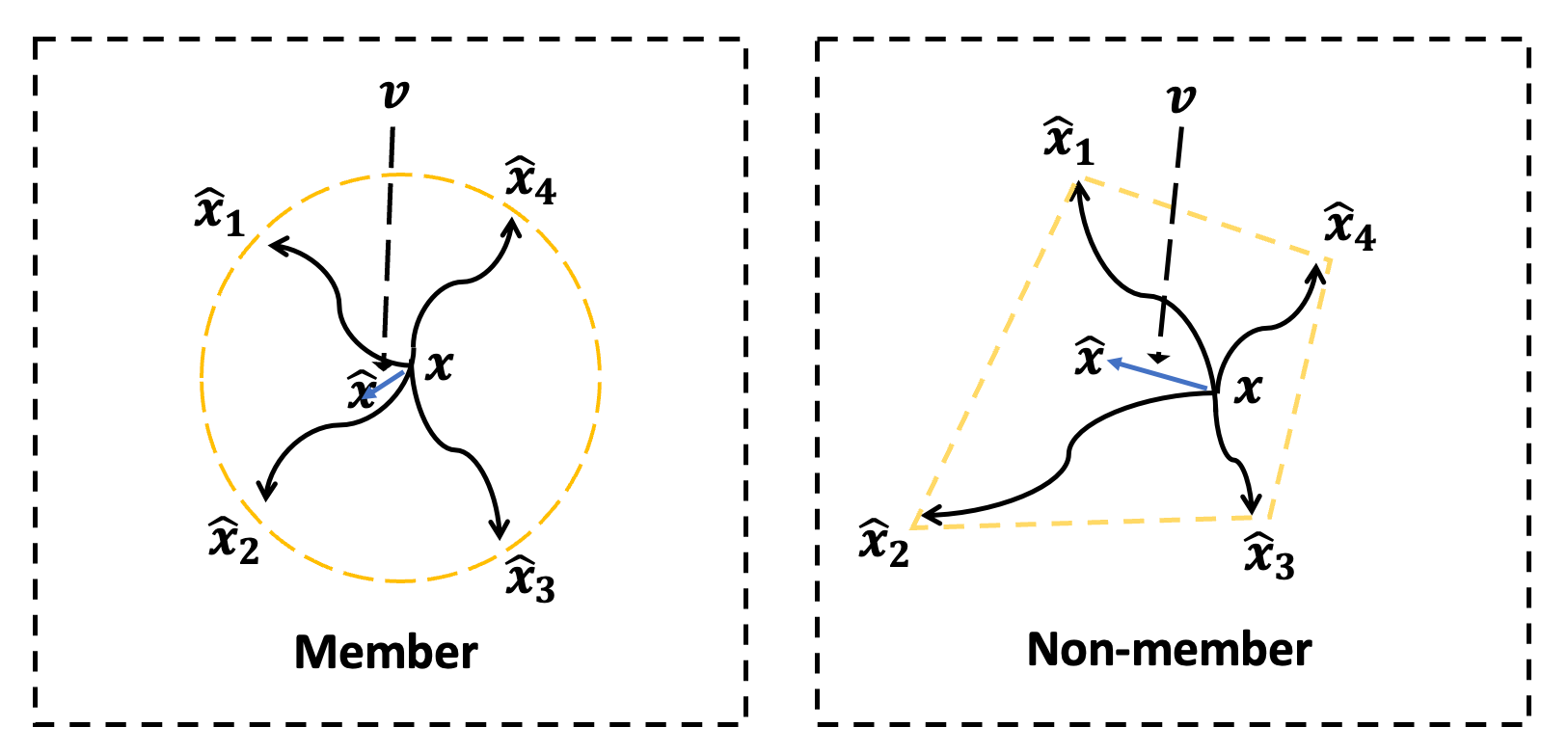}  
  \caption{\textbf{The intuition of our algorithm design}. We denote $x$ as the target image, $\hat{x}_i$ as the $i$-th image generated by the variation API, and $\hat{x}$ as the average image of them. For member image $x$, the difference $v = x - \hat{x}$ will be smaller after averaging due to $x_i$ being an unbiased estimator. }
    \label{fig:intuition}
    \vspace{0.1cm}
\end{figure}
\begin{table*}[ht]
\vspace{-0.1cm}
\captionsetup{skip=5pt}
\caption{Comparison of different methods on four datasets for the DDIM model. Previous methods require access to the U-Net, whereas our methods do not. We use AUC, ASR, and TP as the metrics, TP refers to the True Positive Rate when the False Positive Rate is 1\%.}
\centering
\resizebox{0.85\textwidth}{!}{
\begin{tabular}{|c|c|c|c|c|c|c|c|c|c|c|c|c|}
\hline
\multicolumn{2}{|c|}{Method} & \multicolumn{3}{c|}{CIFAR10} & \multicolumn{3}{c|}{CIFAR100} & \multicolumn{3}{c|}{STL10} \\ \hline
Algorithm & U-Net & AUC & ASR &TP & AUC & ASR&TP & AUC & ASR &TP\\ \hline
Loss~\citep{matsumoto2023membership} & $\Box$ & 0.88 & 0.82 &14.2& 0.92 & 0.84 & 20.9& 0.89 & 0.82 &15.6\\
SecMI~\citep{duan2023diffusion} & $\Box$ & 0.95 & 0.90 &40.7& 0.96 & 0.90 &44.9& 0.94 & 0.88 &26.9 \\ 
PIA~\citep{kong2023efficient} & $\Box$ & 0.95 & 0.89& 48.7& 0.96 & 0.90 &47.0& 0.94 & 0.87 &29.8\\ 
PIAN~\citep{kong2023efficient} & $\Box$ & 0.95 & 0.89 &\textbf{50.4}& 0.91 & 0.85 &39.2& 0.92 & 0.86 &28.5 \\ 
\textbf{\method} & $\blacksquare$ & \textbf{0.96} & \textbf{0.91} &40.7& \textbf{0.98} & \textbf{0.93} &\textbf{48.2}& \textbf{0.96} & \textbf{0.90} &\textbf{31.9}\\ 
\hline

\end{tabular}
}
\captionsetup{belowskip=5pt}
\caption*{$\Box$: Require the access of U-Net.  \quad $\blacksquare$: Do not require the access of U-Net.}
\label{tab:DDPM_auc}
\vspace{-0.3cm}
\end{table*}
With the above intuition, we introduce the details of our algorithm. We independently apply the variation API $n$ times with our target image $x$ as input, average the output images, and then compare the average result $\hat{x}$ with the original image. We will discuss the impact of the averaging number $n$ in Section~\ref{sec:ablation}. We then evaluate the difference between the images using an indicator function:
\begin{align*}
f(x) = \mathbf{1}\left[D(x, \hat{x}) < \tau\right] \,.
\end{align*}
Our algorithm classifies a sample as being in the training set if $D(x, \hat{x})$ is smaller than a threshold $\tau$, where $D(x, \hat{x})$  represents the difference between the two images. It can be calculated using traditional functions, such as the SSIM metric~\citep{wang2004image}. Alternatively, we can train a neural network as a proxy. In Section~\ref{sec:experiment}, we will introduce the details of $D(x, \hat{x})$ used in our experiment. 

Our algorithm is outlined in Algorithm~\ref{alg:main}, and we name it \method. The key ideas of our algorithm are illustrated in Figure~\ref{fig:method}, and we also provide some theoretical analysis in Theorem~\ref{lem:denoise} to support it.

\begin{algorithm}[ht]
\caption{MIA through  \method}
\label{alg:main}
\begin{algorithmic}
\STATE \textbf{Input:}
Target image $x$, diffusion step $t$, average number $n$, threshold $\tau$, the variation API of the target model $V_\theta$, distance function $D$.
\FOR{$k = 1, \ldots, n$}
\STATE Use the variation API $V_\theta$ to generate the variation image $\hat{x}_k = V_\theta(x, t)$ according to Definition~\ref{def:variation}.
\ENDFOR
\STATE Average the reconstructed images from each iteration $\hat{x} = \frac{1}{n}(\hat{x}_1 + \hat{x}_2 + \ldots + \hat{x}_n)$.
\RETURN "YES" if the distance between the two images $D(x,\hat{x})$ is less than $\tau$, otherwise "NO".
\end{algorithmic}
\end{algorithm}



\paragraph{Analysis} We give a short analysis to justify why averaging over $n$ samples in \method \,can reduce the prediction error for training data. We have the following theorem showing that if we use the variation API to input a member $x \sim D_{\text{training}}$, then the error $\|\hat{x} - x\|$ from our method will be small with high probability.

\begin{restatable}{theorem}{denoise}
\label{lem:denoise}
Suppose the DDIM model can learn a parameter $\theta$ such that, for any $x \sim D_{\text{training}}$ with dimension $d$, the prediction error $\epsilon - \epsilon_\theta(\sqrt{\bar{\alpha}_t} x + \sqrt{1 - \bar{\alpha}_t} \epsilon, t) $ is a random variable $X = (X_1, X_2, \ldots, X_d)$ with zero expectation and finite cumulant-generating function for each coordinate~\citep{durret2010probability}. Suppose the sampling interval $k$ is equal to the variation API diffusion step $t$. Let $\hat{x}$ be the average of $n$ output images of $x$ using the variation API. Then we have 
\begin{align*}
    \mathbb{P}(\|\hat{x} - x\| \ge \beta) \le d\exp\left(-n \min_i \Psi^*_{X_i}\left(\frac{\beta \sqrt{ \bar{\alpha}_t}}{\sqrt{d(1 - \bar{\alpha}_t)}} \right)\right) \,,
\end{align*}
where $\beta > 0 $ and $\Psi^*_{X_i}$ is the Fenchel-Legendre dual of the cumulant-generating function $\Psi_{X_i}$.
\end{restatable}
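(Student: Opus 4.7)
The plan is to collapse each variation API call into a closed-form expression for $\hat{x}_i - x$, and then apply the standard Cram\'er--Chernoff tail bound to the empirical mean of $n$ i.i.d.\ copies of the scalar error $X$.

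First, I would use the hypothesis that the sampling interval equals $t$ to reduce the iterated DDIM map $\Phi_\theta(x_t,0) = \phi_\theta(\cdots \phi_\theta(x_t,t),0)$ to a single jump from step $t$ directly to step $0$. Substituting $\bar{\alpha}_0 = 1$ (so $\sqrt{1-\bar{\alpha}_0} = 0$) into Equation~\ref{equ:ddim} and using $x_t^{(i)} = \sqrt{\bar{\alpha}_t}\,x + \sqrt{1-\bar{\alpha}_t}\,\epsilon_i$ for the $i$-th call, one obtains after cancellation
\begin{align*}
\hat{x}_i - x = \frac{\sqrt{1-\bar{\alpha}_t}}{\sqrt{\bar{\alpha}_t}}\bigl(\epsilon_i - \epsilon_\theta(x_t^{(i)}, t)\bigr).
\end{align*}
By the theorem's hypothesis, each scalar $X_i \defeq \|\epsilon_i - \epsilon_\theta(x_t^{(i)}, t)\|$ is a copy of the random variable $X$, and independence of the $n$ API calls makes $\{X_i\}_{i=1}^n$ i.i.d. Averaging and applying the triangle inequality then gives
\begin{align*}
\|\hat{x} - x\| \le \frac{\sqrt{1-\bar{\alpha}_t}}{\sqrt{\bar{\alpha}_t}}\cdot\frac{1}{n}\sum_{i=1}^n X_i,
\end{align*}
so the event $\{\|\hat{x}-x\| \ge \beta\}$ is contained in $\bigl\{\tfrac{1}{n}\sum_i X_i \ge \beta\sqrt{\bar{\alpha}_t}/\sqrt{1-\bar{\alpha}_t}\bigr\}$.

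Finally, I would invoke the Chernoff form of Cram\'er's theorem: for i.i.d.\ $X_i$ with finite cumulant-generating function $\Psi_X$, Markov's inequality applied to $e^{\lambda \sum_i X_i}$ with optimization over $\lambda \ge 0$ yields $\mathbb{P}(\tfrac{1}{n}\sum_i X_i \ge a) \le \exp(-n\,\Psi_X^*(a))$. Setting $a = \beta\sqrt{\bar{\alpha}_t}/\sqrt{1-\bar{\alpha}_t}$ and chaining with the inclusion above delivers the stated tail bound.

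The main obstacle is really just the algebraic bookkeeping in the first step --- specifically, confirming that the single-step reduction under $k=t$ eliminates the residual $\sqrt{1-\bar{\alpha}_{t-1}}\,\epsilon_\theta$ term via $\bar{\alpha}_0 = 1$, leaving $\hat{x}_i - x$ as an exact scalar multiple of $\epsilon_i - \epsilon_\theta(x_t^{(i)}, t)$. A minor subtlety is that the triangle inequality replaces the norm of a vector average by the average of scalar norms, losing potential cancellation across calls; this is consistent with the statement, however, since $\Psi_X^*$ is the dual of the CGF of the scalar prediction error rather than of the vector error itself.
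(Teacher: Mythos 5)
Your proposal is correct and follows essentially the same route as the paper's proof: collapse the variation API to a single DDIM jump from step $t$ to step $0$ (so that $\hat{x}_i - x = \tfrac{\sqrt{1-\bar{\alpha}_t}}{\sqrt{\bar{\alpha}_t}}(\epsilon_i - \epsilon_\theta(x_t^{(i)},t))$), bound $\|\hat{x}-x\|$ by the average of the scalar norms via the triangle inequality, and apply the Cram\'er--Chernoff bound to the i.i.d.\ sum. If anything, your explicit remark that $\bar{\alpha}_0 = 1$ kills the residual $\sqrt{1-\bar{\alpha}_{0}}\,\epsilon_\theta$ term makes the algebraic step slightly more transparent than the paper's version.
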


The theorem suggests that averaging the random variational data from a training set will result in a lower relative error with high probability when we use a large $n$. We defer the proof of this theorem to Appendix~\ref{app:proof}.


\secvsabove
\subsection{MIA on Other Diffusion Models}
\label{sec:latent}
\secvsbelow

In this section, we discuss how we can generalize our algorithm to other diffusion models.
We note that the variation API for stable diffusion is different from DDIM, as it includes the encoder-decoder process. Again we denote $\hat{x} = V_{\theta}(x,t)$, and the details are as follows:
\begin{align*}
z &= \text{Encoder}(x)\,, \quad z_t = \sqrt{\bar{\alpha}_t} z + \sqrt{1 - \bar{\alpha}_t} \epsilon \,, \\
\hat{z} &= \Phi_\theta(z_t, 0) \,, \quad \hat{x} = \text{Decoder}(\hat{z})\,.
\end{align*}
This definition aligns with the image generation process of the stable diffusion model. 

For the Diffusion Transformer, we define the variation API as $V_{\Tilde{\theta}}(x,t)$, where $\Tilde{\theta}$ corresponds to the Vision Transformer architecture instead of the U-net. We repeatedly call the variation API and calculate the difference between the original image and the reconstructed image, as done in DDIM.

\secvsabove
\section{Experiments}
\label{sec:experiment}
\secvsbelow

\begin{table*}[ht]
\captionsetup{skip=5pt}
\vspace{-0.1cm}
\caption{Comparison of different methods for Diffusion Transformer using the same set of metrics as Table~\ref{tab:DDPM_auc}. Previous methods require access to the Vision Transformer, whereas our methods do not. We use AUC, ASR, and TP as the metrics, TP refers to the True Positive Rate when the False Positive Rate is 1\%.}
\centering
\resizebox{0.72\textwidth}{!}{
\begin{tabular}{|c|c|c|c|c|c|c|c|}
\hline
\multicolumn{2}{|c|}{Method}  & \multicolumn{3}{c|}{ImageNet 128 $\times$ 128} & \multicolumn{3}{c|}{ImageNet 256 $\times$ 256} \\ \hline
  Algorithm & Transformer  & AUC & ASR & TP &  AUC & ASR & TP  \\ \hline
Loss~\citep{matsumoto2023membership}  & \(\Box\) & 0.83 & 0.76 & 10.7  & 0.78  & 0.70 & 7.3 \\
SecMI~\citep{duan2023diffusion} & \(\Box\) & 0.80 & 0.73 & 8.3 & 0.88 & 0.80 & 16.3 \\ 
PIA~\citep{kong2023efficient} & \(\Box\) & 0.97 & 0.92 & 32.1  & 0.91 & 0.85 & 6.8 \\ 
PIAN~\citep{kong2023efficient}  & \(\Box\) & 0.66 & 0.64 & 6.2  & 0.67  & 0.66 & 12.8 \\ 
\textbf{\method} & \(\blacksquare\) & \textbf{0.98}  & \textbf{0.95} & \textbf{44.1} & \textbf{0.97} & \textbf{0.94}  & \textbf{47.3}  \\ 
\hline
\end{tabular}
}

\captionsetup{belowskip=5pt}
\caption*{\(\Box\): Require the access of Transformer.  \quad \(\blacksquare\): Do not require the access of Transformer.}
\label{tab:dit}
\vspace{-0.1cm}
\end{table*}

\begin{table*}[ht]
\captionsetup{skip=5pt}
\vspace{-0.1cm}
\caption{Comparison of different methods for Stable Diffusion using the same set of metrics as Table~\ref{tab:DDPM_auc}. Again, previous methods require access to U-Net, whereas our methods do not. We use AUC, ASR and TP as the metrics, TP refers to the True Positive Rate when the False Positive Rate is 1\%. }
\centering
\resizebox{0.68\textwidth}{!}{
\begin{tabular}{|c|c|c|c|c|c|c|c|c|c|}
\hline
\multicolumn{2}{|c|}{Method} & \multicolumn{3}{c|}{Laion5} & \multicolumn{3}{c|}{Laion5 with BLIP} \\ \hline
  Algorithm & U-Net & AUC & ASR & TP &  AUC & ASR & TP  \\ \hline
Loss~\citep{matsumoto2023membership} & $\Box$  & 0.62 & 0.61  &  13.2  & 0.62  & 0.62   & 13.3\\
SecMI~\citep{duan2023diffusion} & $\Box$ & 0.70 &  0.65 &  19.2   & 0.71 & 0.66 & 19.8 \\ 
PIA~\citep{kong2023efficient} & $\Box$ &  0.70 & 0.66 &  19.7 & 0.73 & 0.68 & 20.2 \\ 
PIAN~\citep{kong2023efficient}  & $\Box$  & 0.56 &  0.53 & 4.8  & 0.55 & 0.51 & 4.4 \\ 
\textbf{\method}  & $\blacksquare$   & \textbf{0.81} & \textbf{0.75}  & \textbf{20.6} & \textbf{0.82} &  \textbf{0.75}  & \textbf{21.7}  \\ 
\hline
\end{tabular}
}
\captionsetup{belowskip=5pt}
\caption*{$\Box$: Require the access of U-Net.  \quad $\blacksquare$: Do not require the access of U-Net.}
\label{tab:stable}
\vspace{-0.1cm}
\end{table*}

In this section, we evaluate the performance of our methods across
various datasets and settings. We follow the same experiment setup in previous papers~\citep{duan2023diffusion, kong2023efficient}. The detailed experimental settings, including datasets, models, and
hyper-parameter settings can be found in Appendix \ref{App:A}\footnote{Code is available at
\href{https://github.com/lijingwei0502/diffusion_mia}
{\nolinkurl{https://github.com/lijingwei0502/diffusion_mia}}.}.

\secvsabove
\subsection{Evaluation Metrics}
\secvsbelow

We follow the metrics used in previous papers~\citep{duan2023diffusion, kong2023efficient}, including Area Under Receiver Operating
Characteristic (AUC), Attack Success Rate (ASR), the True Positive Rate
(TP) when the False Positive Rate is 1\%. We also plot the ROC curves.

\secvsabove
\subsection{MIA with DDIM Models}
\secvsbelow
We follow the experimental setup of~\citep{duan2023diffusion, kong2023efficient}.
We train a DDIM model on the CIFAR-10/100~\citep{krizhevsky2009learning} and STL10-Unlabeled datasets~\citep{coates2011analysis}, using the image generation step $T=1000$ and sampling interval $k = 100$. 
For all the datasets, we randomly select 50\% of the training samples to train the model and denote them as members. The remaining 50\% are utilized as nonmembers. We use~\citep{matsumoto2023membership}, \citep{duan2023diffusion}, \citep{kong2023efficient} as our baseline methods. We fix the diffusion step at $t = 200$, and independently call the variation API 10 times to take the average of the output images as $\hat{x}$. We will discuss the impact of the diffusion step and the average number in Section~\ref{sec:ablation}.

For the difference function $D(x, \hat{x})$, following the setup in~\citep{duan2023diffusion}, we take the pixel-wise absolute value of $x - \hat{x}$ to obtain a difference vector $v$ for each image. Using the ResNet-18 network~\citep{he2016deep} and denoting it as $f_R$, we perform binary classification on these difference vectors. We use 20\% of the data as the training set and obtained the label of each difference vector being classified as a member or nonmember. The difference function is obtained by the negated value of the probability outputed by the neural network predicting as member: $D(x, \hat{x}) = -f_R(v)$. 

The result is shown in Table~\ref{tab:DDPM_auc}. Our method achieves high performance, surpassing several baseline algorithms in most setups, and does not require access to the internal structure of the model. This demonstrates that our algorithm is highly effective and robust.

\secvsabove
\subsection{MIA with Diffusion Transformers}
\secvsbelow

We train a diffusion transformer model on the ImageNet~\citep{deng2009imagenet} dataset following the setup of~\citep{peebles2023scalable}. We randomly select 100,000 images from the ImageNet training set to train the model with resolutions of either $128\times128$ or $256\times256$. For the membership inference attack setup, 1000 images are randomly chosen from our training set as the member set, and another 1000 images are randomly selected from the ImageNet validation set as the non-member set. We fix the diffusion step at $t=150$ and the DDIM step at $k=50$, and we independently call the variation API 10 times to take the average of the output images as $\hat{x}$. 

We use~\citep{matsumoto2023membership}, \citep{duan2023diffusion}, \citep{kong2023efficient} as our baseline methods. Since these work did not study the case of Diffusion Transformers, we integrate their algorithms into the DiT framework for evaluation. For the difference function $D(x, \hat{x})$, following the setup in~\citep{duan2023diffusion, kong2023efficient}, we take the L2 norm of $x - \hat{x}$ to measure the differences between two image.The results, presented in Table~\ref{tab:dit}, demonstrate that our method outperforms baseline algorithms and does not require access to the Vision Transformer.

We also plot ROC curves for the DDIM train on CIFAR-10 and the Diffusion Transformer train on ImageNet in Appendix~\ref{app:more}. The curves further demonstrate the effectiveness of our method.

\secvsabove
\subsection{MIA with the Stable Diffusion Model}
\label{sec:stable}
\secvsbelow

We conduct experiments on the original Stable Diffusion model,
i.e., stable-diffusion-v1-4 provided by Huggingface, without further fine-tuning or modifications.
We follow the experiment setup of~\citep{duan2023diffusion, kong2023efficient}, use the LAION-5B dataset~\citep{schuhmann2022laion} as member and COCO2017-val~\citep{lin2014microsoft} as non-member. We randomly select 2500 images in each dataset. We test two scenarios:
Knowing the ground truth text, which we denote as Laion5; Not knowing the ground truth text and generating text through BLIP~\citep{li2022blip}, which we denote as Laion5 with BLIP. 

For the difference function $D(x, \hat{x})$, since the images in these datasets better correlate with human visual perception, we directly use the SSIM metric~\citep{wang2004image} to measure the differences between two images. 
The results, presented in Table~\ref{tab:stable}, demonstrate that our methods achieve high accuracy in this setup, outperforming baseline algorithms by approximately 10\%. Notably, our methods do not require access to U-Net.

\secvsabove
\subsection{Ablation Studies}
\label{sec:ablation}
\secvsbelow

In this section, we alter some experimental parameters to test the robustness of our algorithm. We primarily focus on the ablation study of DDIM and Diffusion Transformer, while the ablation study related to Stable Diffusion is provided in Appendix~\ref{app:more}.

\paravs
\paragraph{The Impact of Average Numbers}
We test the effect of using different averaging numbers $n$ on the results, as shown in Figure~\ref{fig:average}. It can be observed that averaging the images from multiple independent samples to generate the output $\hat{x}$ further improves accuracy. This observation validates the algorithm design intuition discussed in Section~\ref{sec:intuition}. Additional figures showing the ASR results are presented in Appendix~\ref{app:more}.

\begin{figure}[h]
  \centering
    \includegraphics[scale = 0.34]{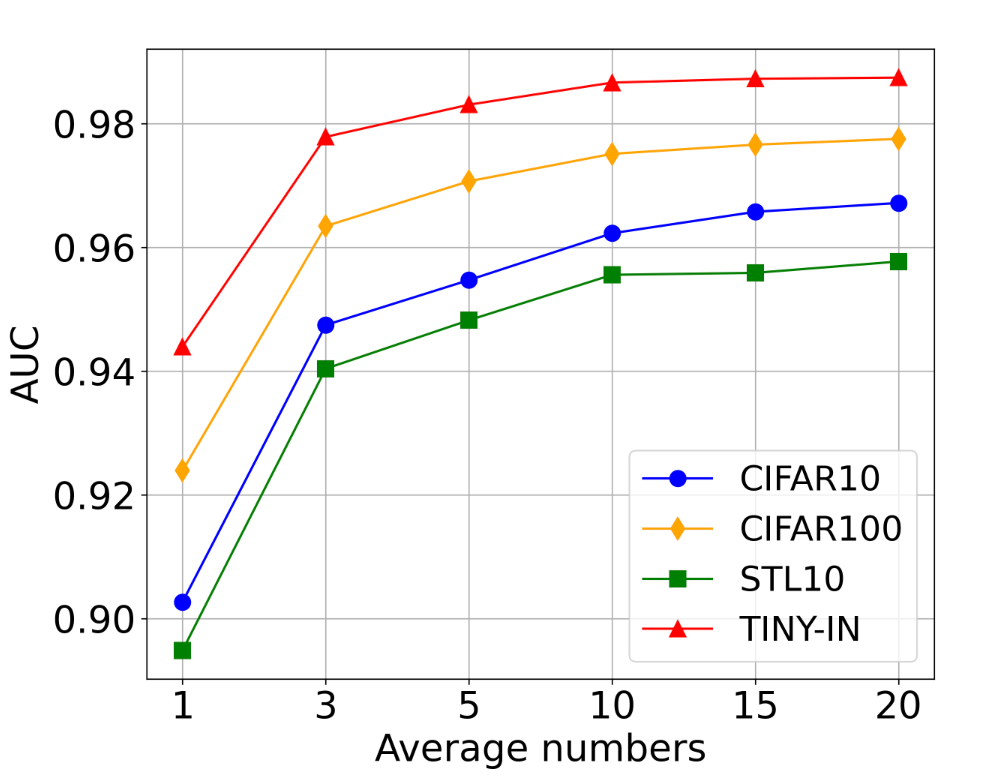}\hspace{0.45cm}
    \caption{\textbf{The impact of average numbers.} We train DDIM model on CIFAR-10. Averaging multiple independent samples proves to be effective in further improving the overall performance of our algorithm, which validates the intuition of our algorithm design.}
    \label{fig:average}
\end{figure}

\paravs
\paragraph{The Impact of Diffusion Steps}
We adjust the diffusion step $t$ to examine its impact on the results. The experiments are conducted using the DDIM model on CIFAR-10 with diffusion steps for inference. The outcomes are presented in Figure~\ref{fig:noise}. Our findings indicate that as long as a moderate step is chosen, the attack performance remains excellent, demonstrating that our algorithm is not sensitive to the choice of $t$. This further underscores the robustness of our algorithm. We also plot the change of diffusion step for other diffusion models and datasets in the Appendix~\ref{app:more}.

\begin{figure}[h]
\vspace{-0.2cm}
  \centering
    \includegraphics[scale = 0.25]{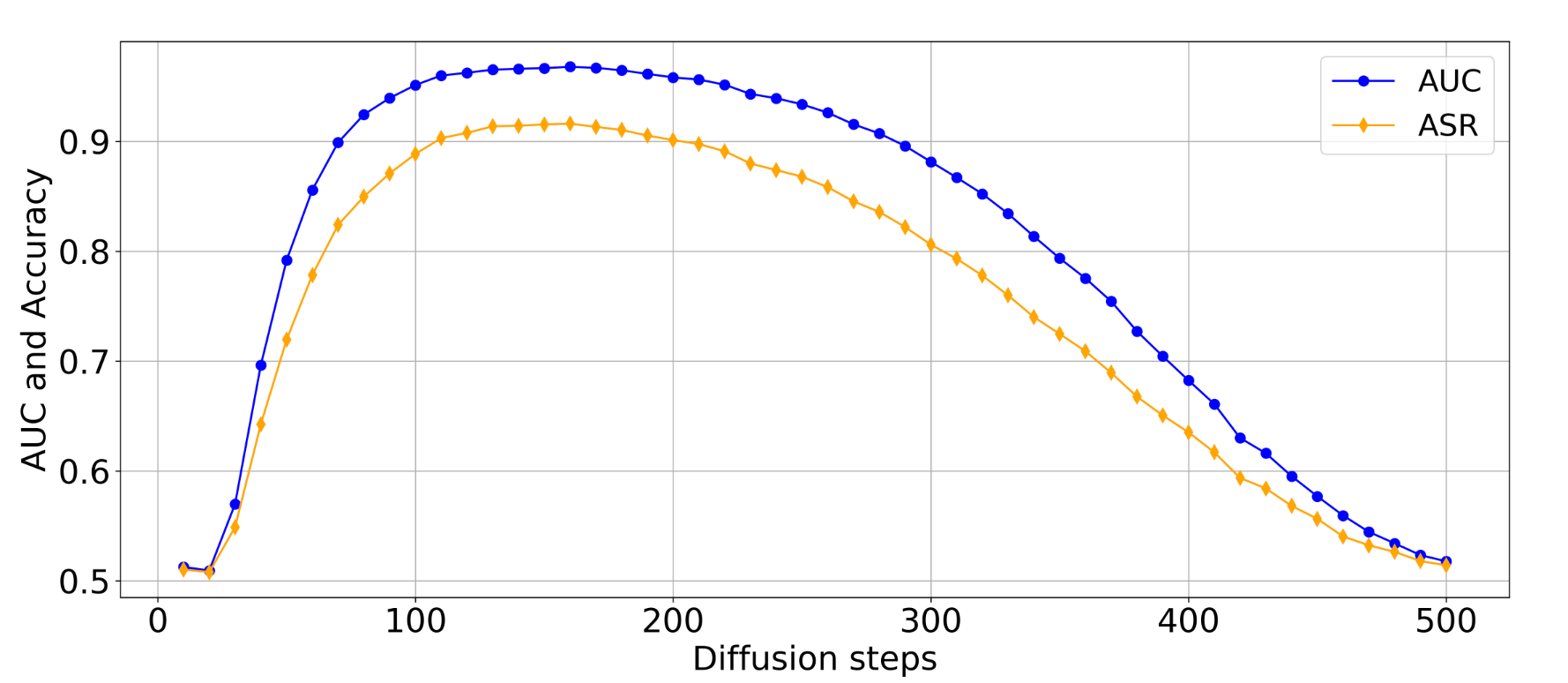}
    \caption{\textbf{The impact of diffusion steps on DDIM.} We train a DDIM model on the CIFAR-10 dataset and use different diffusion steps for inference. We find that high accuracy can be achieved as long as a moderate step number is chosen. This opens up possibilities for practical applications in real-world scenarios.}
    \label{fig:noise}
\end{figure}

\paravs
\paragraph{The Impact of Sampling Intervals}
In DDIM and Diffusion Transformer, the model uses a set of steps denoted by $\tau_1, \tau_2, \ldots, \tau_T$. It samples each of these steps to create the image. The spacing between these steps is referred to as the sampling interval.
We change the sampling interval $k$ and check the influence on the results. As shown in Figure~\ref{fig:ddim}, we adjust this parameter for attacks on both DDIM and Diffusion Transformer. We find that our method achieves high AUC values across different sampling intervals, demonstrating that our detection capabilities are not significantly limited by this parameter. Additionally, in Appendix~\ref{app:more}, we plot the effect of different sampling intervals on ASR and find that the impact is minimal.
\begin{figure}[ht]
  \centering
    \includegraphics[scale = 0.34]{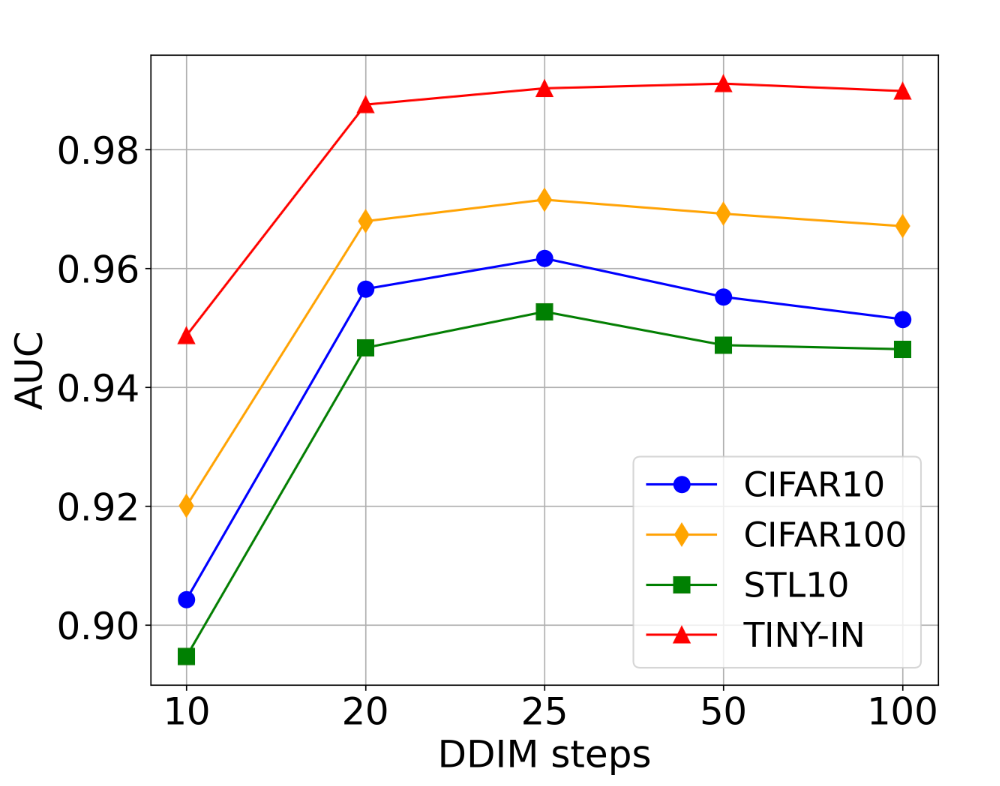}\hspace{0.5cm}
    \caption{\textbf{The impact of sampling intervals.} We train DDIM model on CIFAR-10. We find that adjusting the sampling interval has a relatively small influence in the first case and does not affect our method in the latter case. This makes our algorithm applicable to more setups.}
    \label{fig:ddim}
\end{figure}

\secvsabove
\section{An Application to DALL-E's API}
\secvsbelow

In this section, we conduct a small experiment with online API services to test the effectiveness of our algorithm.
We test with the DALL-E 2~\citep{ramesh2022hierarchical} model since DALL-E 2 provides a variation API service. 
We select different thresholds and classify an image with a variation error below the threshold as members and those above the threshold as non-members. We then calculate metrics such as AUC and ASR. Since the baseline algorithms~\citep{matsumoto2023membership, duan2023diffusion, kong2023efficient} require intermediate results, we are unable to test these algorithms under the online API setup.

One challenge with MIA for DALL-E 2 is that it does not disclose its training set. However, since it is adept at generating various famous artworks, we select 30 famous paintings from five famous artists: Van Gogh, Monet, Da Vinci, Dali, and Rembrandt, to form our member set. We believe that it is reasonable to hypothesize that these artworks are used in DALL-E 2's training set. For constructing the non-member set, we used Stable Diffusion 3~\citep{esser2024scaling} to generate images based on the titles of each painting in the member set. The benefit of constructing non-members in this way is that it allows for control over the content of the artwork descriptions, reducing bias caused by content shift. Moreover, these generated images are certainly not in the DALL-E 2's training set. 

 The results, presented in Table~\ref{tab:dalle}, demonstrate that our algorithm achieves a relatively high accuracy under this evaluation method. Our observation is illustrated in Figure~\ref{fig:dalle}. As seen in the figure, for Monet's iconic painting "Water Lily Pond", the original artwork shows minimal changes when using DALL-E 2's variation API, retaining most of its main features. In contrast, the artwork generated by Stable Diffusion 3 undergoes significant changes, with variations in both the number of flowers and lily pads. Therefore, we hypothesize that artworks with smaller changes after API usage are more likely to have appeared in the model's training set.
Other results of changes to the member and non-member inputs after applying the variation API can be found in Appendix~\ref{app:images}. 
 \begin{table}[H]
            \centering
            \caption{\textbf{The results of applying our algorithm to DALL-E 2's variation API}. We assume some famous paintings as members and use Stable Diffusion 3 along with the titles of these artworks to generate corresponding non-members. Our algorithm also achieves high accuracy under this setup. }
            \begin{tabular}{|c|c|c|}
                \hline
                Metrics & $L_1$ distance&$L_2$ distance \\
                \hline
                AUC & 76.2&88.3 \\
                \hline
                ASR & 74.5&81.4 \\
                \hline
            \end{tabular}
            \label{tab:dalle}
        \end{table}

\begin{figure}
 \centering
\includegraphics[width=0.7\linewidth]{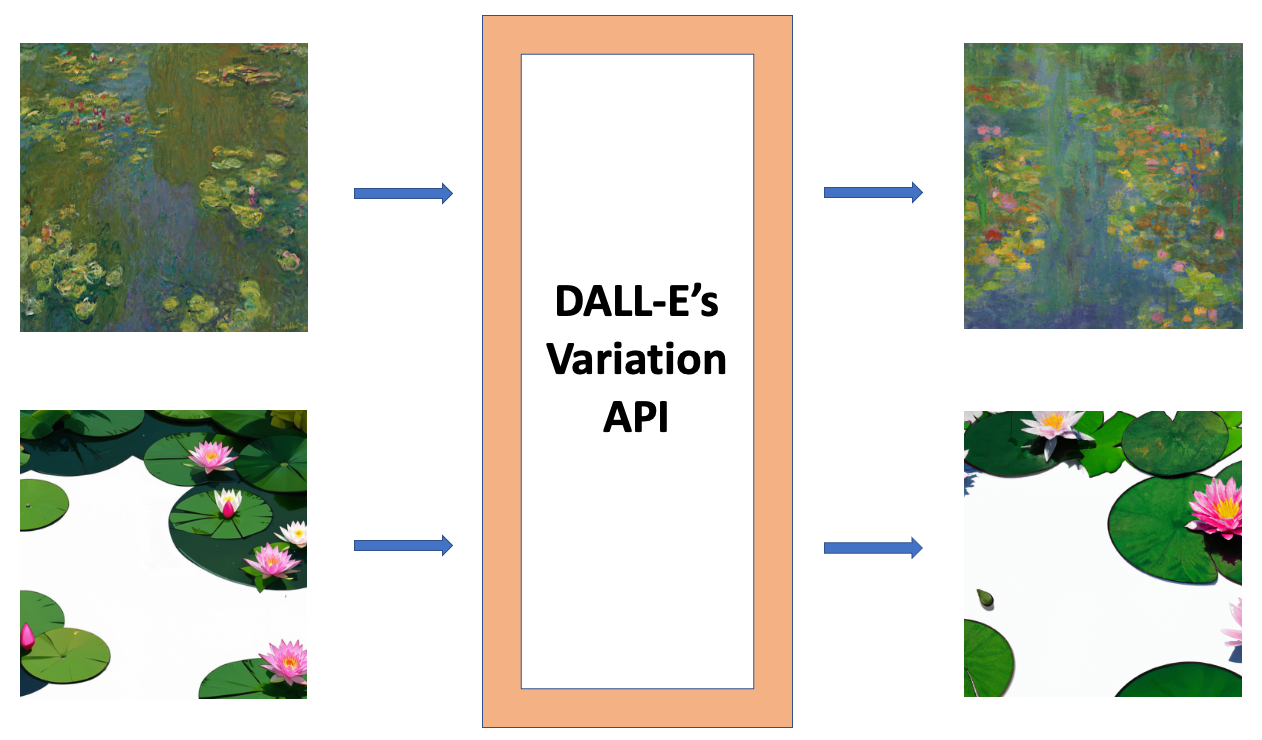} 
        \caption{\textbf{Main observation of our attack.} DALL-E 2's variation API makes minimal changes to famous artworks, while nonmember images with similar content undergo significant alterations. More examples of changes to the member and non-member inputs after applying the variation API can be found in Appendix~\ref{app:images}.}
        \label{fig:dalle}
\end{figure}

The results indicate that we can apply our algorithm with online API services. We acknowledge that this part of the experimental design has certain limitations. Not every famous painting we selected may be present in DALL-E 2's training set, and our construction of non-members may exhibit some distribution shift relative to the member dataset. Here, we aim to provide a real-world application for black-box evaluation, leaving a more comprehensive experimental design as future work.

\secvsabove
\section{Conclusion, Limitations and Future Directions}
\label{sec:conclusion}
\secvsbelow

In this work, we introduce a novel membership inference attack method specifically designed for diffusion models. Our approach only requires access to the variation API of the model, bypassing the need for internal network components such as the U-net. This represents an advancement for commercial diffusion models, which typically restrict internal access. We demonstrate the effectiveness of our approach across various datasets, showing that it achieves high accuracy in identifying whether an image was included in the training dataset. Our algorithm can detect data misuse by the model, representing a step forward in protecting the copyright of artworks.

However, our method has certain limitations, particularly the requirement for a moderate diffusion step $t$ in the variation API. The algorithm's accuracy declines when the diffusion step is excessively high. As such, we propose our method as an initial step towards black-box MIA, with a more comprehensive solution left for subsequent exploration.

Future work could focus on developing more robust algorithms capable of handling a broader range of diffusion steps. Improving the interpretability of our method and extending it to other generative
models are also valuable directions for further research.

\section*{Impact Statement}

This paper addresses the critical issue of protecting intellectual property and data privacy in the era of rapidly advancing generative models. By proposing a novel membership inference attack (MIA) method, our work aims to identify whether specific artworks have been used during the training of diffusion models. This has significant implications for safeguarding creators' copyrights, detecting unauthorized use of artistic works, and promoting ethical practices in the development and deployment of generative models. While our method is designed with privacy protection in mind, we acknowledge the potential for misuse. We hope this research serves as a foundation for further discussions on data privacy and ethical considerations, encouraging responsible use and development of diffusion models in the community.


\bibliography{example_paper}
\bibliographystyle{icml2025}

\newpage
\appendix
\onecolumn
\section{Datasets, Models and Hyperparameters}
\label{App:A}

We use NVIDIA RTX 6000 graphics cards for all our experiments.

For DDIM, we follow the training hyperparameters of~\citep{duan2023diffusion} to train a DDIM model on the CIFAR-10/100~\citep{krizhevsky2009learning} and STL10-Unlabeled~\citep{coates2011analysis} datasets, using 1000 image generation steps ($T=1000$) and a sampling interval of $k=100$. The training iterations are set to 800,000. For all the datasets, we randomly select 50\% of the training samples to train the model and designate them as members. The remaining 50\% of the training samples are used as nonmembers.

We use~\citep{matsumoto2023membership,duan2023diffusion,kong2023efficient} as our baseline methods. We use their official code repositories and apply the optimal hyperparameters from their papers. For our algorithm, we fix the diffusion step at $t=200$ and independently call the variation API 10 times to average the output images as $\hat{x}$.

For the Diffusion Transformers, we train the model on the ImageNet~\citep{deng2009imagenet} dataset, using 1000 image generation steps ($T=1000$), while for other training hyperparameters, we follow the setup of~\citep{peebles2023scalable}. We randomly select 100,000 images from the ImageNet training set to train the model at resolutions of either $128\times128$ or $256\times256$. For the $128\times128$ image size, we use 200,000 training iterations. For the $256\times256$ image size, we use 300,000 training iterations. These numbers of training iterations are chosen to ensure the generation of high-quality images.

For the membership inference attack setup, 1000 images are randomly selected from our training set as the member set, and another 1000 images are randomly chosen from the ImageNet validation set as the non-member set. We fix the diffusion step at $t=100$ and the DDIM step at $k=50$, and independently call the variation API 10 times to average the output images as $\hat{x}$.

We also use~\citep{matsumoto2023membership,duan2023diffusion,kong2023efficient} as our baseline methods. Since these papers do not study Diffusion Transformers, we adapt their algorithms to the DiT framework for evaluation. We fix the DDIM step at $k=50$ and choose diffusion steps $t \in [50,100,150,200,250,300]$, recording their optimal solutions under these different hyperparameter settings.

For the Stable Diffusion experiments, we use the original Stable Diffusion model, i.e., stable-diffusion-v1-4 provided by Huggingface, without further fine-tuning or modifications. We follow the experimental setup of~\citep{duan2023diffusion, kong2023efficient}, using an image generation step of $T=1000$ and a sampling interval of $k = 10$. We use the LAION-5B dataset~\citep{schuhmann2022laion} as the member set and COCO2017-val~\citep{lin2014microsoft} as the non-member set. We randomly select 2500 images from each dataset. We test two scenarios: knowing the ground truth text, denoted as Laion; and generating text through BLIP~\citep{li2022blip}, denoted as Laion with BLIP. 

We use the hyperparameters from the papers~\citep{duan2023diffusion, kong2023efficient} to run the baseline methods. For our algorithms \method, we fix the diffusion step at $t = 10$ to call the variation API and directly use the SSIM metric~\citep{wang2004image} to measure the differences between two images. Other hyperparameters remain the same as in the baseline methods.

\newpage
\section{More Experiment Results}
\label{app:more}
In this section, we show other experiment results which is not in our main paper. We conduct more ablation studies.

\paragraph{The Impact of Average Numbers}
We use different average numbers $n$ and test the influence on the results. Besides the figures of AUC in the main paper, the figures of ASR of DDIM and Diffusion Transformer are also plotted in Figure~\ref{fig:average_other_auc}. In addition, we plot the figures of AUC and ASR for Stable Diffusion in Figure~\ref{fig:average_other_auc_sd}. We observe that in the DDIM and Diffusion Transformer setup, averaging the images from multiple independent samples as the output further improves accuracy. In the stable diffusion setup, since the image size in the dataset is larger (512x512), the reconstructed images are more stable and not influenced by perturbations at specific coordinates. Therefore, averaging multiple images is not necessary. 

\begin{figure}[ht]
  \centering
    \includegraphics[scale = 0.33]{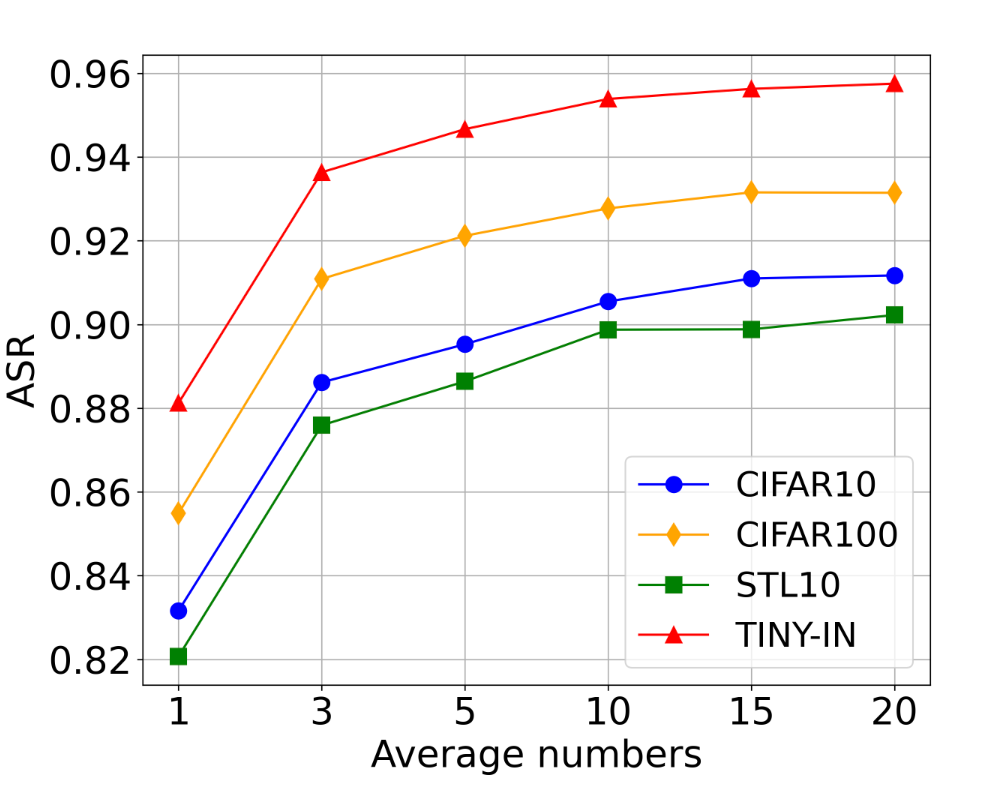}
    \includegraphics[scale = 0.31]{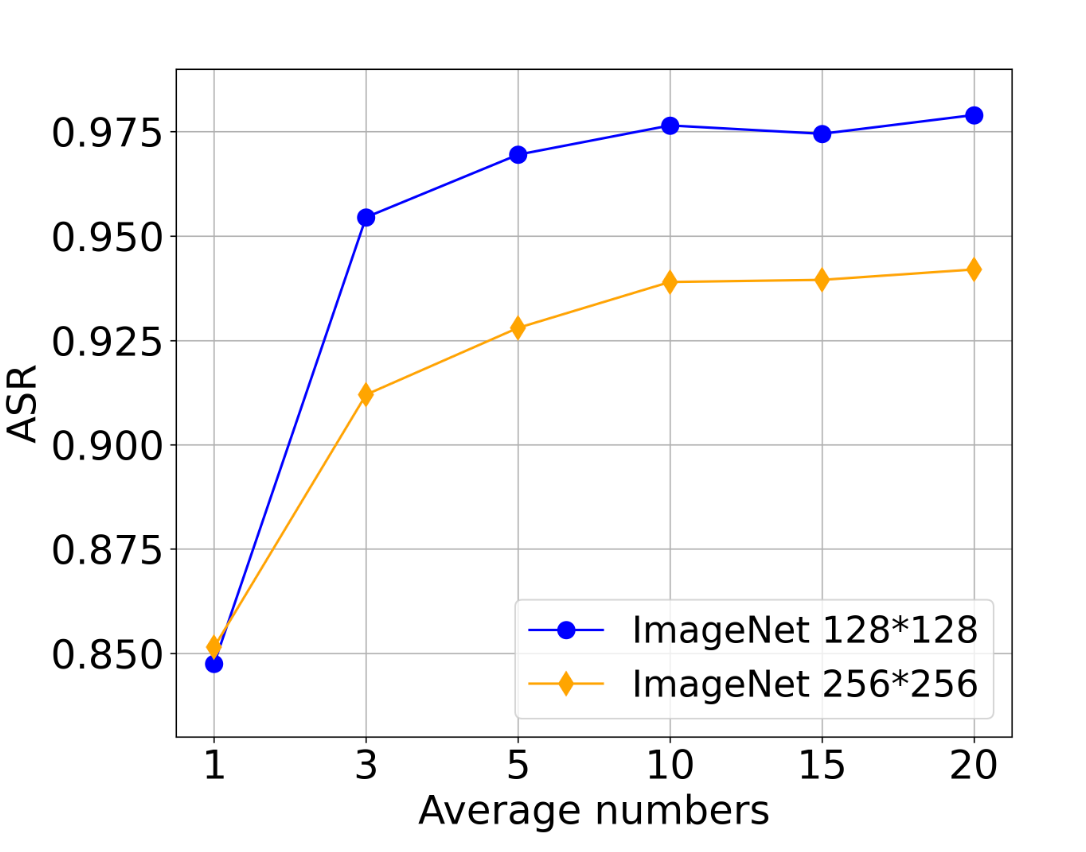}
    \caption{\textbf{The impact of average numbers.} Left: DDIM model on CIFAR-10. Right: Diffusion Transformer model on Imagenet. Averaging can further improve the performance of our algorithm. }
    \label{fig:average_other_auc}
\end{figure}

\begin{figure}[ht]
  \centering
    \includegraphics[scale = 0.375]{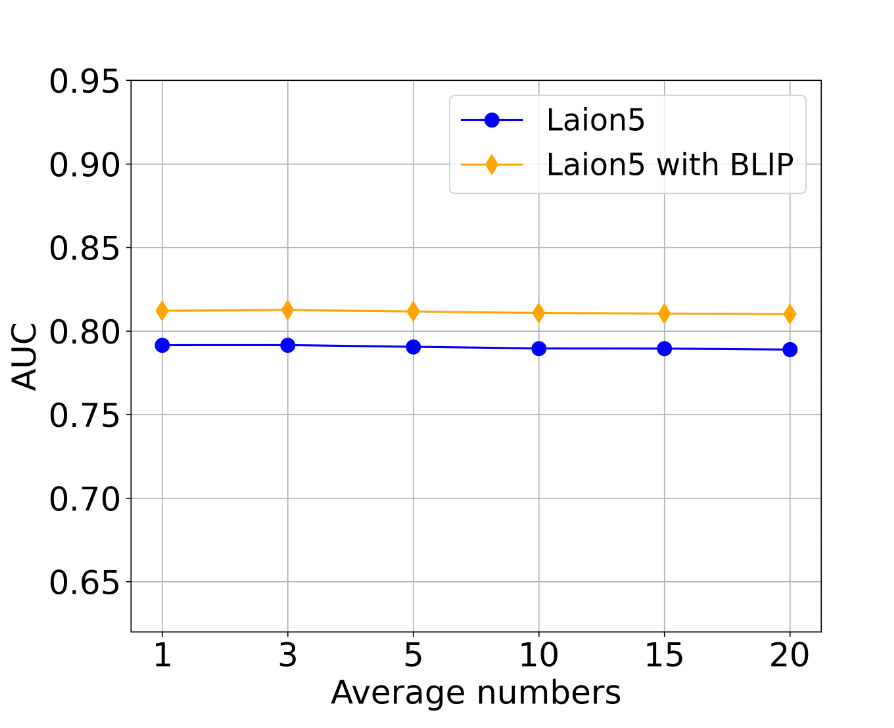}
    \includegraphics[scale = 0.372]{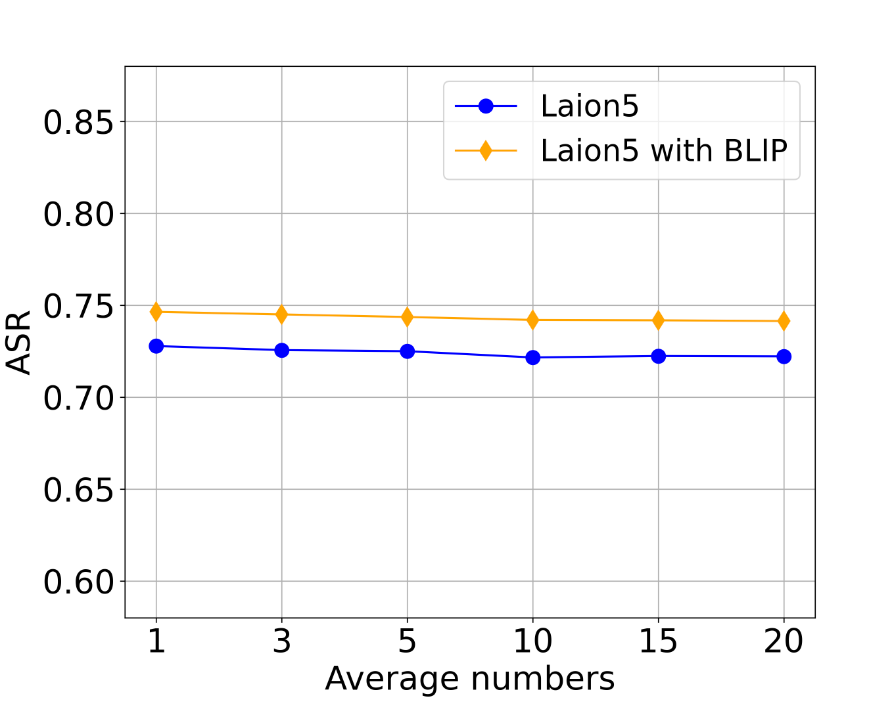}
    \caption{\textbf{The impact of average numbers on Stable Diffusion.} We plot the AUC and ASR metrics, and averaging does not improve performance.}
    \label{fig:average_other_auc_sd}
\end{figure}

\paragraph{The Impact of Diffusion Steps}
We adjust the diffusion step $t$ to examine its impact on the results. We train the DDIM model on CIFAR100 (Figure~\ref{fig:noise_100}), STL10 (Figure~\ref{fig:noise_stl}) dataset, Diffusion Transformer on the Imagenet $256\times256$ (Figure~\ref{fig:noise_dit})dataset and Stable Diffusion on Laion5 dataset (Figure~\ref{fig:noise_stable}). From the results, we see that our algorithm can achieve high performance over a wide range of diffusion steps. This opens up possibilities for practical applications in real-world scenarios.

\begin{figure}[h]
  \centering
    \includegraphics[scale = 0.28]{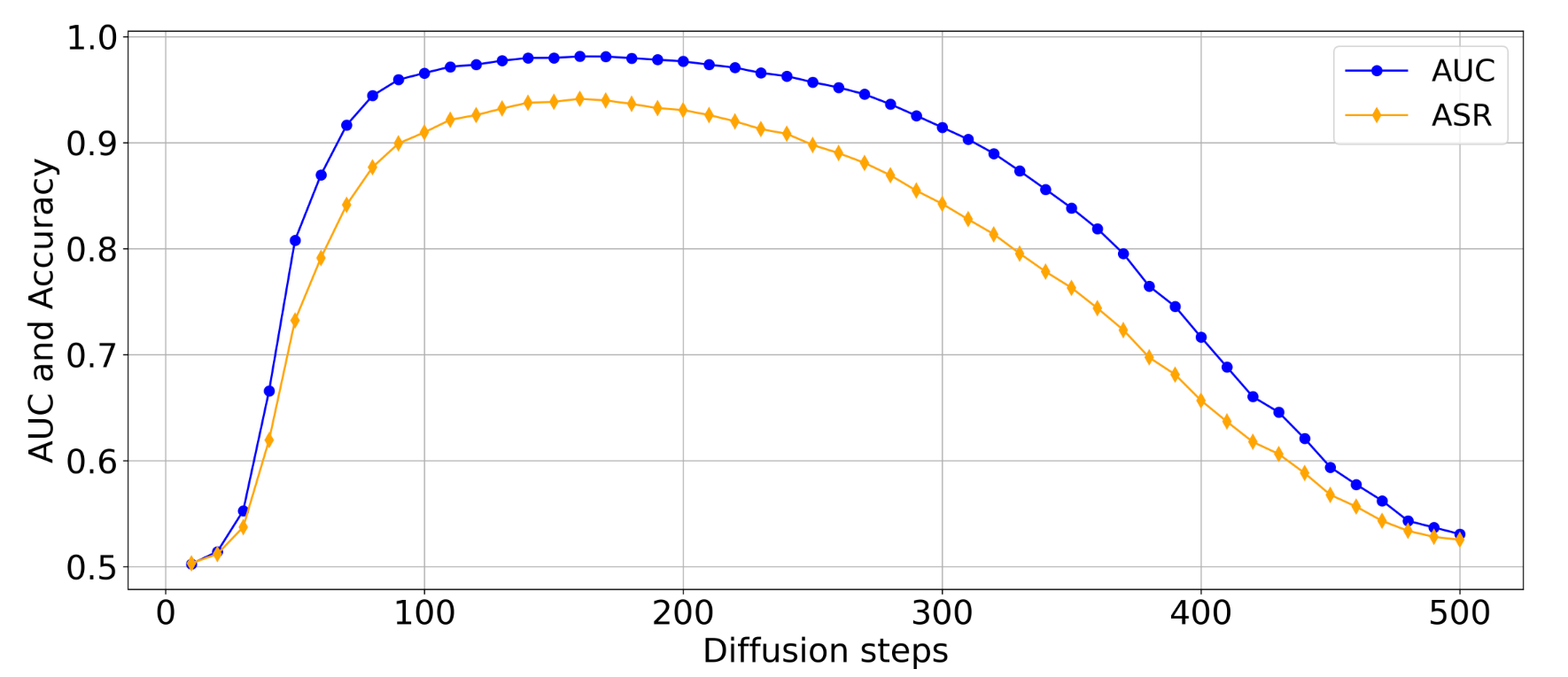}
    \caption{\textbf{The impact of diffusion steps.} We train a DDIM model on the CIFAR-100 dataset and use different diffusion step for inference. }
    \label{fig:noise_100}
\end{figure}

\begin{figure}[h]
  \centering
    \includegraphics[scale = 0.28]{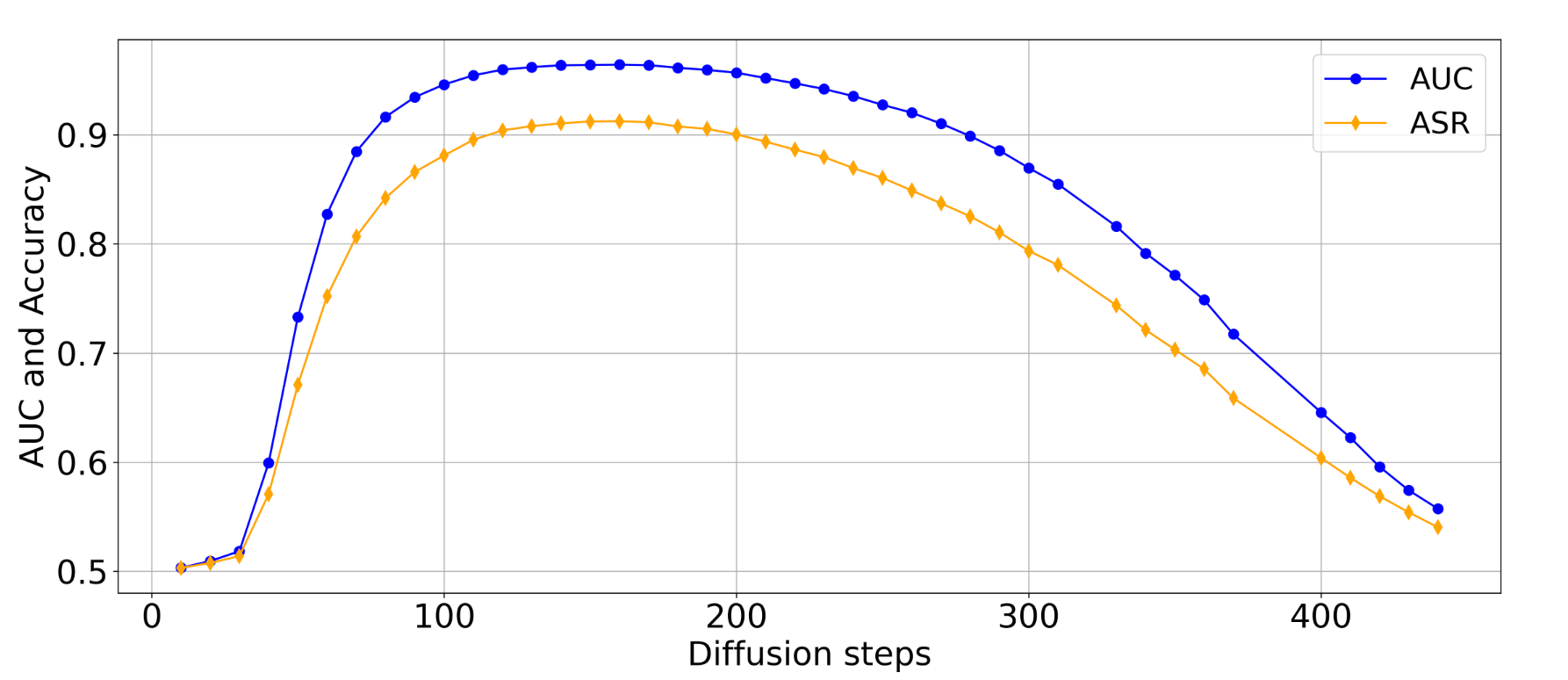}
    \caption{\textbf{The impact of diffusion steps.} We train a DDIM model on the STL-10 dataset and use different diffusion step for inference. }
    \label{fig:noise_stl}
\end{figure}

\begin{figure}[H]
  \centering
    \includegraphics[scale = 0.3]{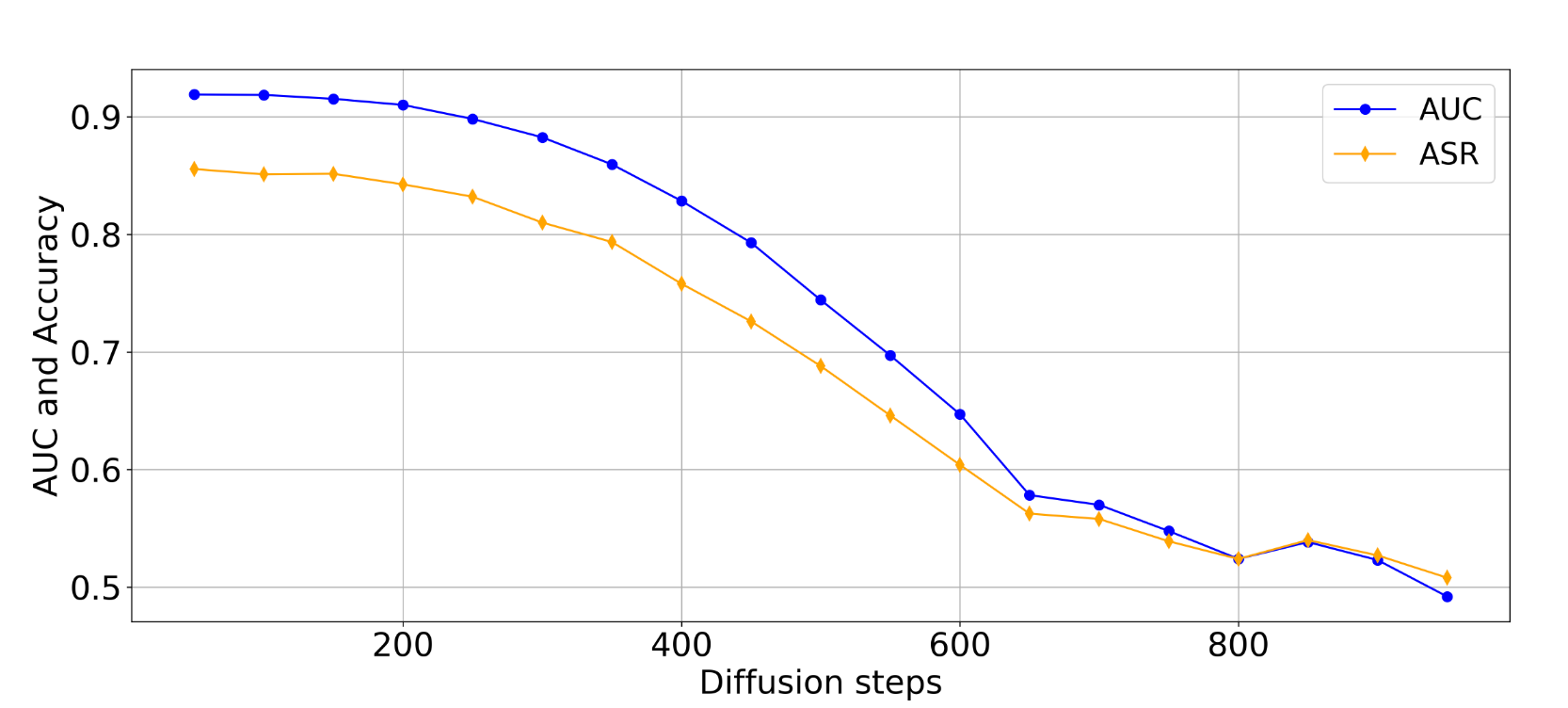}
    \caption{\textbf{The impact of diffusion steps.} We train a Diffusion Transformer model on the Imagenet $256\times256$ dataset and use different diffusion step for inference. The robust results imply that our algorithm is also not very sensitive to the choice of $t$ in this setup.}\label{fig:noise_dit}
\end{figure}

\begin{figure}[H]
  \centering
    \includegraphics[scale = 0.28]{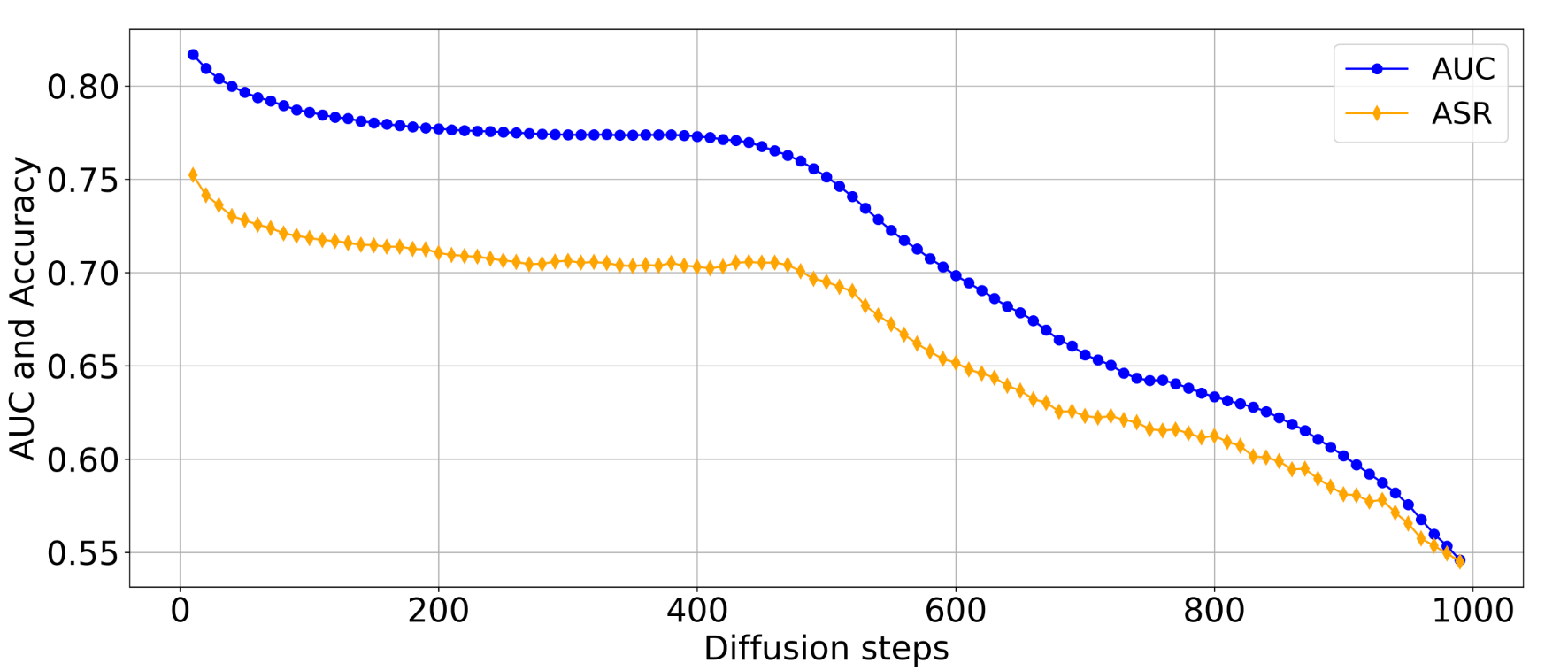}
    \caption{\textbf{The impact of diffusion steps.} We use the Stable Diffusion model with the Laion5 dataset for evaluation and test different diffusion steps for inference. Specifically, using relatively small diffusion steps results in better performance.}
    \label{fig:noise_stable}
\end{figure}

\paragraph{The Impact of Sampling Intervals}

We change the sampling intervals to see if there is any influence on the results. In addition to the AUC figures in the main paper, the ASR figures of DDIM and Diffusion Transformer are also plotted in Figure~\ref{asr:ddim}. We also plot the AUC and ASR for Stable Diffusion in Figure~\ref{asr:sd}. From the results, we observe that our algorithm consistently performs well across different sampling intervals.

\begin{figure}[h]
  \centering
    \includegraphics[scale = 0.3]{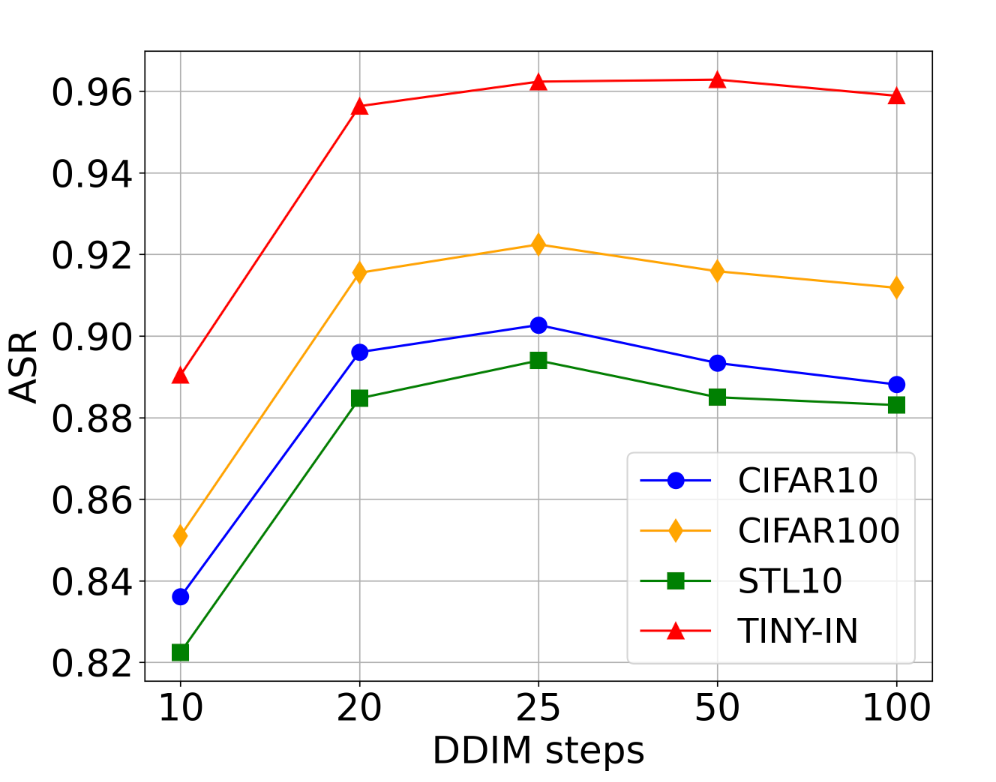}
    \includegraphics[scale = 0.285]{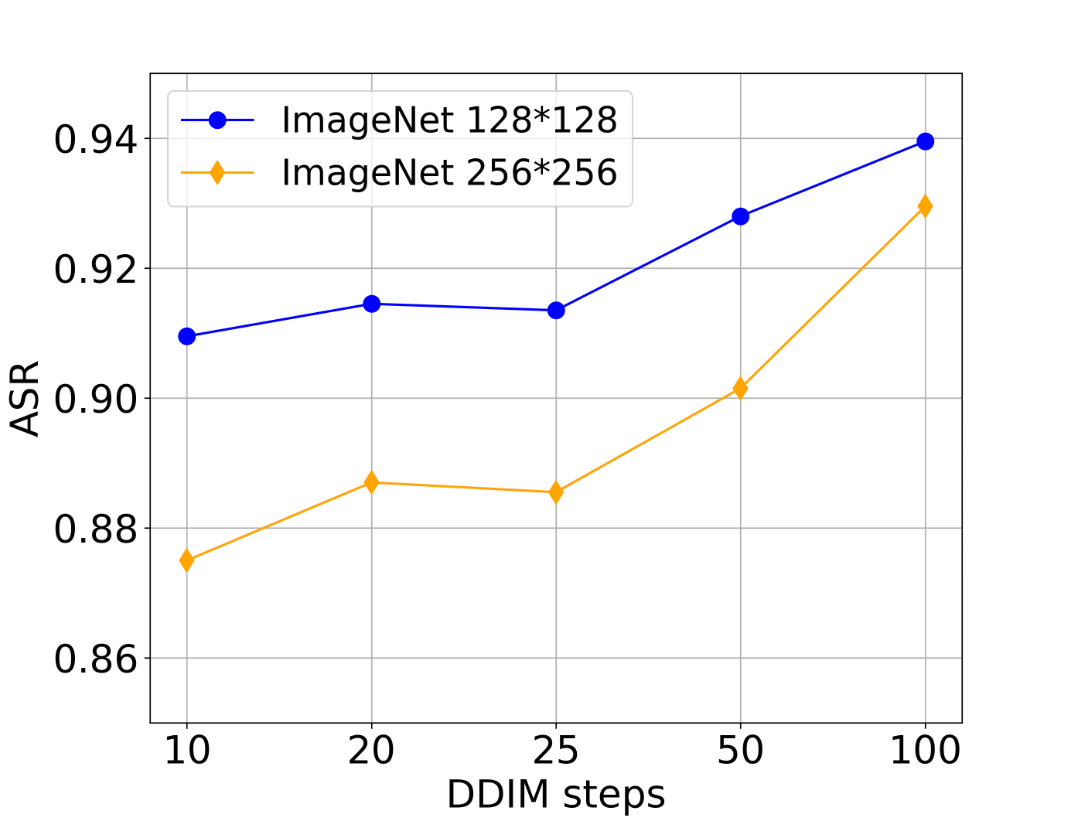}
    \caption{\textbf{The impact of sampling intervals.} Left: DDIM model on CIFAR-10. Right: Diffusion Transformer
on Imagenet. We find that adjusting the sampling interval does not significantly affect of the accuracy.}
    \label{asr:ddim}
\end{figure}

\begin{figure}[h]
  \centering
    \includegraphics[scale = 0.29]{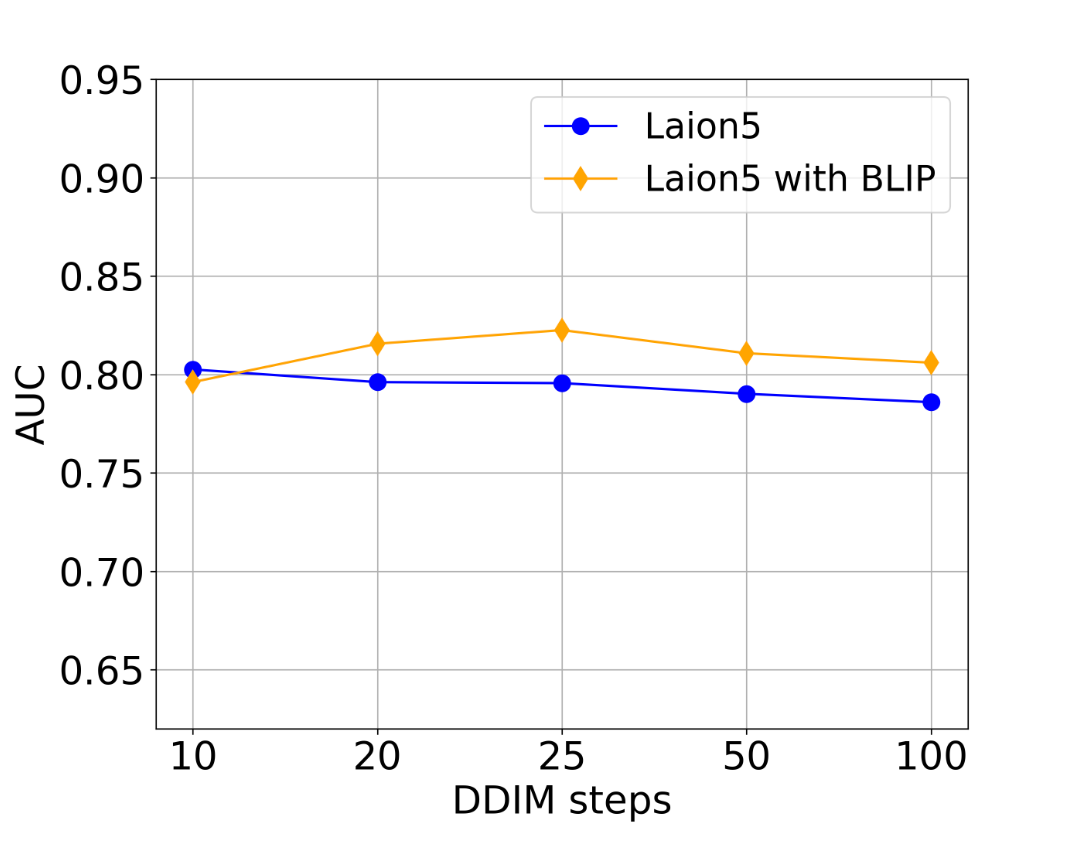}
    \includegraphics[scale = 0.29]{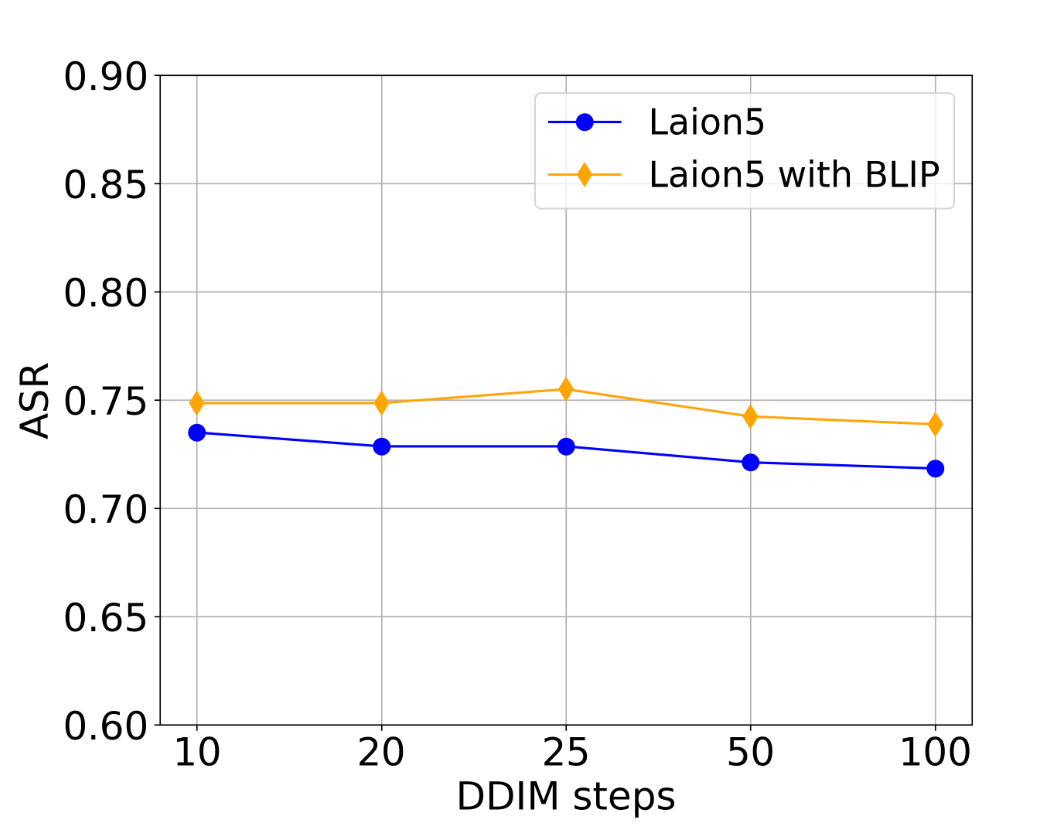}
    \caption{\textbf{The impact of sampling intervals on Stable Diffusion.} We plot the AUC and ASR metrics, observe that different sampling intervals have minimal impact on it.}
    \label{asr:sd}
\end{figure}

\paragraph{The ROC curves}
We plot ROC curves for the DDIM train on CIFAR-10 and Diffusion Transformer train on ImageNet $256\times256$ in Figure~\ref{fig:roc}. The curves further demonstrate the effectiveness of our method.

\begin{figure}[ht]
  \centering
    \includegraphics[scale = 0.27]{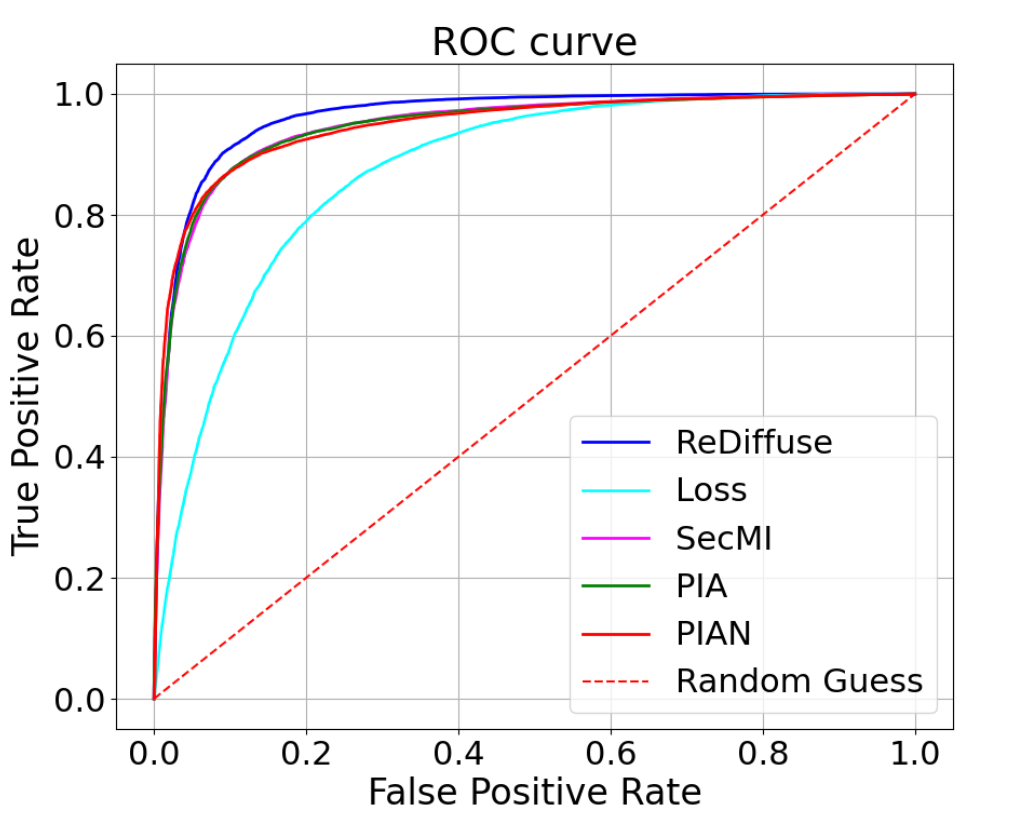}\hspace{0.4cm}
    \includegraphics[scale = 0.255]{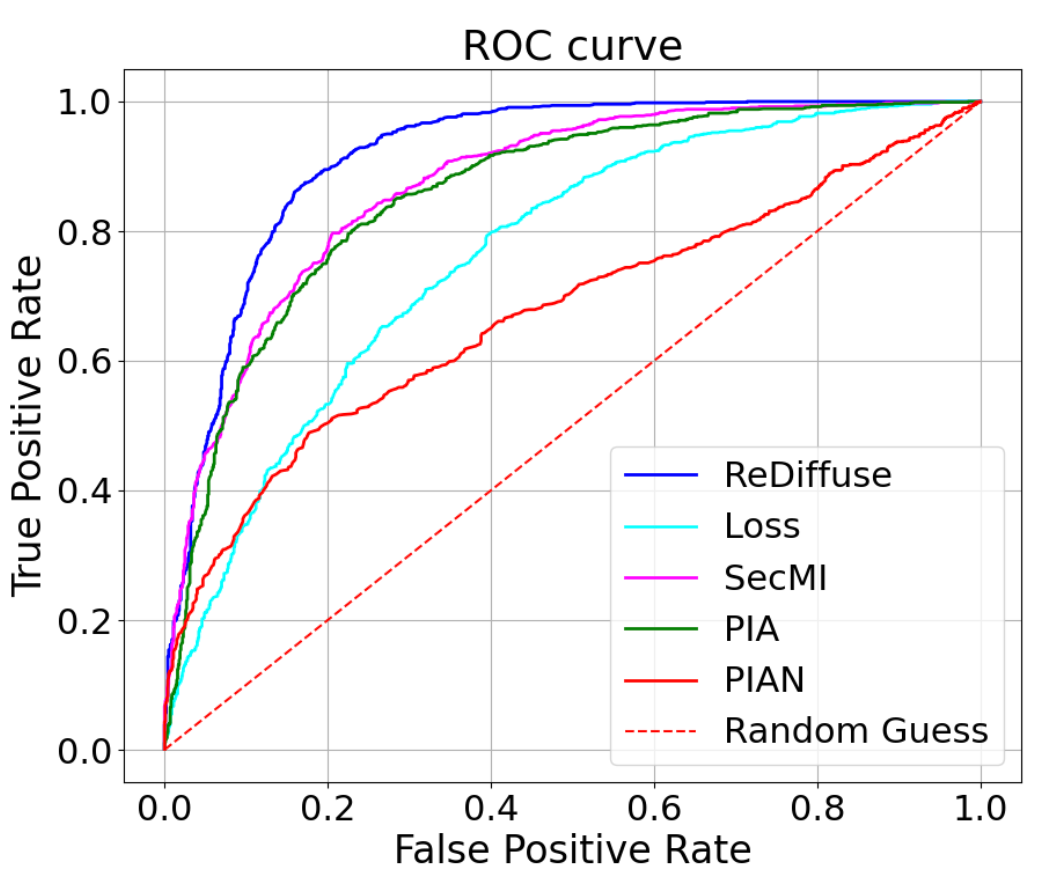}
    \caption{\textbf{The ROC curves of various setups.} Left: DDIM model on CIFAR-10. Right: Diffusion Transformer on ImageNet $256\times256$. The curves show that our algorithm outperforms the baseline algorithms.}
    \label{fig:roc}
\end{figure}

\newpage
\section{Proof}
\label{app:proof}

In this section, we present the proof of the theorem.
\denoise*
\begin{proof}
We denote $x^i$ as the $i$-th output image of $x$ using the variation API. We denote the $i$-th Gaussian noise as $\epsilon^i$ and the forward process yields $x^i_t = \sqrt{\bar{\alpha}_t} x + \sqrt{1 - \bar{\alpha}_t} \epsilon^i$.

As the sampling interval is equal to the variation API diffusion step $t$, we have
\begin{align*}
    x^i - x &= \frac{x^i_t  - \sqrt{1-\bar{\alpha}_t} \epsilon_\theta(\sqrt{\bar{\alpha}_t} x + \sqrt{1 - \bar{\alpha}_t} \epsilon^i, t)}{\sqrt{\bar{\alpha}_t}} - x \,,\\
    &= \frac{\sqrt{\bar{\alpha}_t} x + \sqrt{1 - \bar{\alpha}_t} \epsilon^i  - \sqrt{1-\bar{\alpha}_t} \epsilon_\theta(\sqrt{\bar{\alpha}_t} x + \sqrt{1 - \bar{\alpha}_t} \epsilon^i, t)}{\sqrt{\bar{\alpha}_t}} - x \,,\\    
    &= \frac{\sqrt{1 - \bar{\alpha}_t}}{\sqrt{ \bar{\alpha}_t}} (\epsilon^i  -\epsilon_\theta(\sqrt{\bar{\alpha}_t} x + \sqrt{1 - \bar{\alpha}_t} \epsilon^i, t)) \,.  \\
\end{align*}

If we denote the random variable $X^i = (X^i_1, X^i_2, \ldots, X^i_d)$ represents $\epsilon^i  -\epsilon_\theta(\sqrt{\bar{\alpha}_t} x + \sqrt{1 - \bar{\alpha}_t} \epsilon^i, t)$, then we consider $S^i_n = \sum_{j=1}^n X^i_j$. From the assumption we know that all the $X^i_j$ are i.i.d. random variables(with the same distribution of $X_i$ in Theorem 4.2) with zero expectation and finite cumulant-generating function $\Psi_{X_i}(s) = \log E[e^{sX_i}] < + \infty$. Using the theorem of Cramér–Chernoff method for sums of i.i.d. random variables~\citep{durret2010probability}, we get the following probability inequality for any $\beta > 0$:
\begin{align*}
    \mathbb{P}(|S^i_n|  \ge \beta) \le \exp(-n \Psi^*_{X_i}(\frac{\beta}{n})) \,,
\end{align*}
where $\beta > 0 $ and $\Psi^*_{X_i}(y)=\sup_{s>0} (s y - \Psi_{X_i}(s)) $ is the Fenchel-Legendre dual of the cumulant-generating function $\Psi_{X_i}$.

Therefore, denote $S_n = (S^1_n, S^2_n, \ldots, S^d_n)$, taking the definition of $\|\hat{x} - x \|= \|\frac{1}{n} \sum_{i=1}^n (x^i - x)\| =  \frac{\sqrt{1 - \bar{\alpha}_t}}{n\sqrt{ \bar{\alpha}_t}} \|S_n\|$, we get the following bound of the reconstruction error:
\begin{align*}
    \mathbb{P}(\|\hat{x} - x\| \ge \beta) \le \sum^d_{i=1}P(|S^i_n| \ge \frac{n\beta \sqrt{ \bar{\alpha}_t}}{\sqrt{d(1 - \bar{\alpha}_t})}) \le d\exp(-n \min_i \Psi^*_{X_i}(\frac{\beta \sqrt{ \bar{\alpha}_t}}{\sqrt{d(1 - \bar{\alpha}_t})} )) \,,
\end{align*}
 So averaging the randomly reconstructed data from a training set will result in a smaller reconstruction error with high probability of $1 - \mathcal{O}(\exp(-n))$ when we use a large $n$.

\end{proof}

\newpage
\section{More Results of Variation Images}
\label{app:images}

In this section, we provide more results of the variation of member and nonmember image when applying to DALL-E 2's variation API. The experimental results are recorded in Figure~\ref{fig:dalle_more}, from which we can see that the variation in the member inputs after applying the DALL-E 2's variation API is relatively small than non-member inputs.

\begin{figure}[ht]
  \centering
    \includegraphics[scale = 0.35]{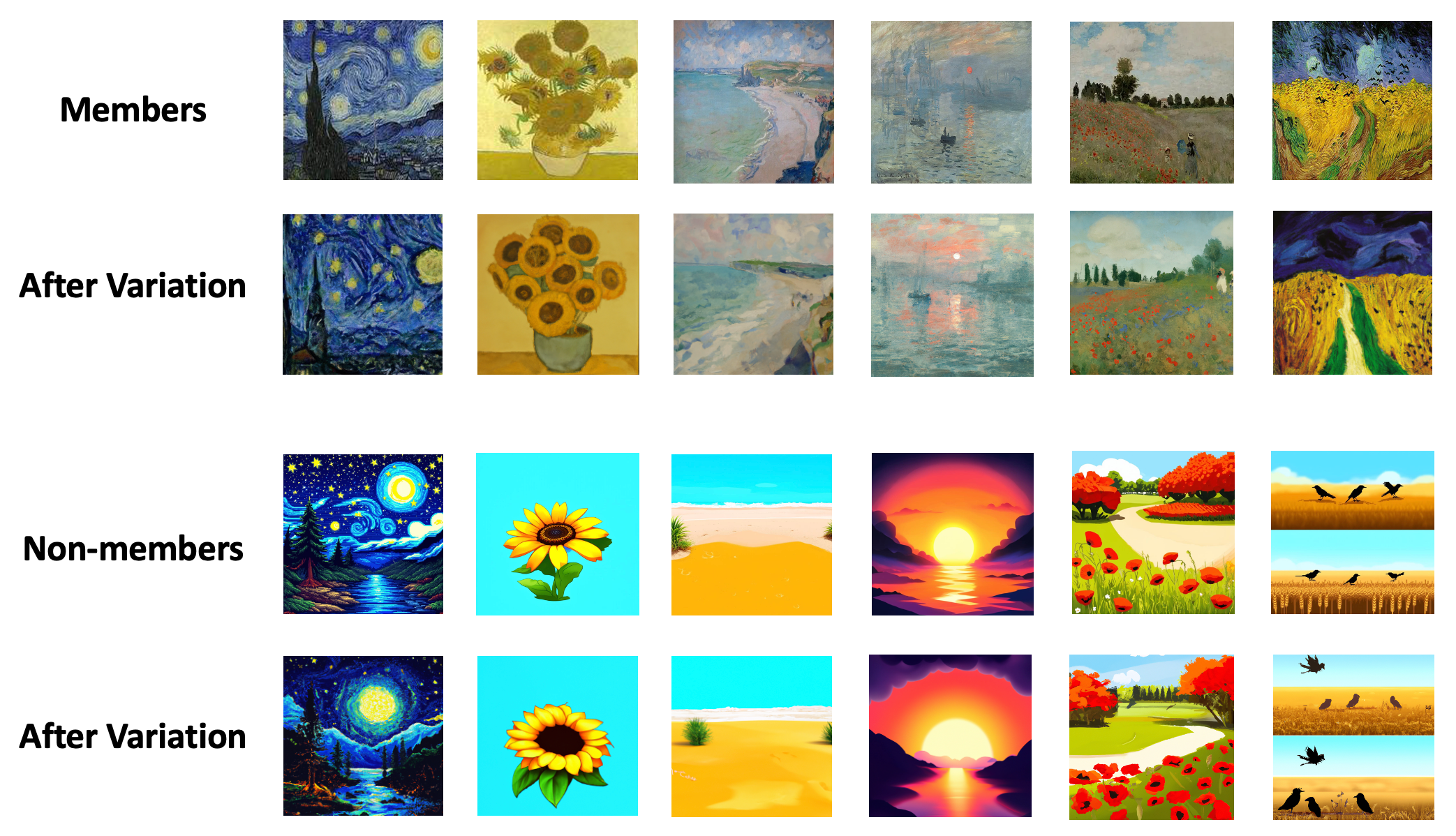}
    \caption{\textbf{More results of the variation images.} From the figures we can see that DALL-E 2's variation API makes less changes to images in member set than non-member set.}
    \label{fig:dalle_more}
\end{figure}

\end{document}